\newcommand{\cL}{{\cal L}}
\newcommand{\cC}{{\cal C}}
\newcommand{\mR}{{\mathbb R}}
\newcommand{\bD}{{\mathbf D}}
\newcommand{\bU}{{\mathbf U}}
\newcommand{\bs}{{\mathbf s}}
\newcommand{\bv}{{\mathbf v}}
\newcommand{\bF}{{\mathbf F}}
\newcommand{\bk}{{\mathbf k}}
\newcommand{\by}{{\mathbf y}}
\newcommand{\bff}{{\mathbf f}}
\newcommand{\bx}{{\mathbf x}}
\newcommand{\brho}{{\boldsymbol \rho}}
\newcommand{\bPsi}{{\boldsymbol \Psi}}
\newcommand{\bC}{{\boldsymbol{\mathcal C}}}
\newcommand{\bX}{{\mathbf X}}
\newcommand{\bg}{{\mathbf g}}
\newcommand{\bA}{{\mathbf A}}
\newcommand{\bB}{{\mathbf B}}
\newcommand{\bu}{{\mathbf u}}
\newcommand{\bz}{{\mathbf z}}
\DeclareMathOperator*{\argmin}{arg\,min}
\newcommand{\bJ}{{\mathbf J}}
\newcommand{\bH}{{\mathbf H}}
\newtheorem{theorem}{Theorem}
\newtheorem{definition}{Definition}
\newtheorem{lemma}{Lemma}
\newtheorem{assumption}{Assumption}
\def\BibTeX{{\rm B\kern-.05em{\sc i\kern-.025em b}\kern-.08em
    T\kern-.1667em\lower.7ex\hbox{E}\kern-.125emX}}
\begin{document}
\title{Control
Density Function for Robust Safety and Convergence}
\author{Joseph Moyalan, \IEEEmembership{Student Member, IEEE}, Sriram S. K. S Narayanan, \IEEEmembership{Student Member, IEEE}, and Umesh Vaidya, \IEEEmembership{Member, IEEE}
\thanks{Financial support from the NSF CPS award 1932458 and NSF 2031573 is greatly acknowledged. Joseph Moyalan, Sriram S. K. S Narayanan, and Umesh Vaidya are with the Department of Mechanical Engineering, Clemson University, Clemson, SC; {email: \{jmoyala, sriramk, uvaidya\}@clemson.edu}.}
}

\maketitle

\begin{abstract}
We introduce a novel approach for safe control design based on the density function. A control density function (CDF) is introduced to synthesize a safe controller for a nonlinear dynamic system. The CDF can be viewed as a dual to the control barrier function (CBF), a popular approach used for safe control design. While the safety certificate using the barrier function is based on the notion of invariance, the dual certificate involving the density function has a physical interpretation of occupancy. This occupancy-based physical interpretation is instrumental in providing an analytical construction of density function used for safe control synthesis. The safe control design problem is formulated using the density function as a quadratic programming (QP) problem.
In contrast to the QP proposed for control synthesis using CBF, the proposed CDF-based QP can combine both the safety and convergence conditions to target state into single constraints. 
Further, we consider robustness against uncertainty in system dynamics and the initial condition and provide theoretical results for robust navigation using the CDF. Finally, we present simulation results for safe navigation with single integrator and double-gyre fluid flow-field examples, followed by robust navigation using the bicycle model and autonomous lane-keeping examples. 
\end{abstract}

\begin{IEEEkeywords}
Control density function, nonlinear systems, robust control, safe navigation 
\end{IEEEkeywords}

\section{Introduction}
Navigation is one of the crucial tasks in robotics applications, enabling robots to move efficiently and effectively in obstacle-riddled environments. Applications include vehicle autonomy on structured or unstructured terrain, space exploration, robotics manipulator, agriculture robotics, and underwater exploration \cite{turan2022autonomous,queralta2020collaborative,sun2021review,binbin2021research,arafat2023vision,qiao2023survey,li2023survey,pfrunder2017real}. The problem essentially consists of navigating the system dynamics from some given initial set to a target set while avoiding collision with the unsafe or obstacle set. 
In the real-world use of navigation algorithms, one has to deal with various uncertainties, including uncertainty in system dynamics, sensing, and estimation of the robot's location and the environment. This makes it imperative to develop navigation algorithms that are robust to these sources of uncertainties. Given the significance of this problem, there is a vast body of literature on this topic addressing various aspects of navigation problems. Some of them include avoiding static \cite{minguez2016motion} and dynamic \cite{chen2020navigation} obstacle sets, incorporating uncertainty because of unmodelled dynamics \cite{castaneda2021pointwise,nguyen2021robust}, incorporating uncertainty in sensing and location of the obstacle sets \cite{wang2017safe,xu2017realizing}, and accounting for external disturbances which act as unknown input to the system dynamics \cite{alan2021safe,kolathaya2018input}.

The potential and navigation functions-based methods dominate the various navigation algorithms \cite{khatib1985real,koditschek1990robot}. These functions create a repelling and attractive field near the unsafe and target set, respectively, to guide the system trajectories to the target set while avoiding collision with the unsafe sets \cite{rimon1990exact}. However, even with relatively simple system dynamics, these methods suffer from the existence of local minima and lack of systematic techniques for constructing such functions \cite{warren1990multiple,park2001obstacle}. More recently, control barrier functions have been proposed for safe navigation. The barrier function at its heart relies on the weaker notion of set invariance proposed by Mitio Nagumo to ensure safety \cite{nagumo1942lage,blanchini1999set}. The barrier function is used in control dynamical system setting as control barrier function (CBF). The use of CBF in a control setting is similar to the use of the control Lyapunov function (CLF) for stabilization. For a given barrier function, the safe controller synthesis is formulated as a quadratic program (QP) \cite{ames2016control,ames2019control}. The QP formulation is attractive due to its convex nature and is especially advantageous for solving safe navigation problems for systems with non-trivial dynamics. However, the control synthesis using the barrier function is inherently nonconvex if formulated as a joint search problem over the barrier function and the feedback controller. 

The safe navigation problem can alternatively formulated in the space of density. The density-based approach can be viewed as dual to barrier function-based approach. The dual formulation has the advantage that the co-design problem for jointly finding the controller and the density function for safety and convergence can be written as a convex optimization problem. This convexity in the co-design problem has been exploited for safe controller synthesis in \cite{rantzer2004analysis,prajna2007convex}. Although convex, the infinite-dimensional nature of the optimization problem presents challenges for its finite-dimensional approximation. The finite-dimensional approximation involves polynomial-based parameterization of the optimization variables and the Sum of Squares (SoS)-based computational method for finding the solution \cite{prajna2004nonlinear,moyalan2023data}.

In contrast to the pure optimization-based approach to the safe control synthesis using the density function, the occupancy-based physical interpretation of the density function can also be exploited for safe navigation problems. The measure associated with the density function has a physical interpretation of occupancy. The occupancy-based interpretation has a natural connection with the linear operator theoretic framework involving Perron-Frobenius and Koopman operators \cite{vaidya2010nonlinear,huang2022convex}. This link to the linear operator theory has provided alternate computational methods, not restricted to polynomial basis function, and based on the finite-dimensional data-driven approximation of these linear operators to solve the safe navigation problem \cite{vaidya2018optimal,moyalan2022navigation,vaidya2023data,moyalan2023convex,moyalan2023off}. Still, the co-design problem is computationally challenging for a system with large dimensional state space. The work in \cite{zheng2023safe} circumvents the deficiency of constructing density functions for solving navigation problems in large dimensional state space. In particular, based on the occupancy-based interpretation of the density function, an analytical construction of the density function is proposed to solve the safe navigation problem for systems with simple integrator dynamics. This is a tremendous breakthrough as it allows us to overcome the challenge associated with the construction of the navigation and potential functions. This paper extends the use of analytically constructed density function as proposed in \cite{zheng2023safe} for safe navigation to systems with non-trivial dynamics.

We propose the use of a control density function (CDF) to solve the safe navigation problem. The main contributions of this paper are as follows. The controller design problem for safe navigation is formulated as a QP problem using the CDF. Unlike QP formulated for the safe controller synthesis using CBF, the QP for CDF can combine both safety and convergence to target conditions using a single constraint. We also provide results on robust safety, where the safe controller is designed to be robust against modeling uncertainty and uncertainty in the initial condition. 
Finally, we present simulation results involving the bicycle model and lane-keeping application to verify the main results of this paper. This paper serves as the extended version of \cite{moyalan2024synthesizing2}.
The robust safety and convergence results are new to this paper. The results proving the continuity property and feedback nature of the safe control input are also new to this paper.

The organization of the paper is as follows. In Section \ref{section_preliminaries}, we present preliminaries and notation used in the paper. In Section \ref{Section_cdf}, we present the main results of the paper on the control density function for safe control design followed by Section \ref{Section_robust} on robust safety. The quadratic program based on the control density function for safe control design is presented in Section \ref{section_QPfinite}. Simulation results are presented in Section \ref{section_simulation} followed by conclusion in Section \ref{section_conclusion}. 

\section{Notations and Preliminaries}\label{section_preliminaries}
\subsection{Notations}
Let $\bX\subseteq\mR^n$ denote an open subset or manifold without boundary. Similarly, $\bX_0,\bX_T$, and $\bX_u$ denote the initial, target, and unsafe set, respectively. With no loss of generality, we will assume that the target set $\bX_T$ is at the origin, i.e., $\bX_T=\{0\}$. $\cC^k(\bX)$ denotes the space of all $k$-times continuous differentiable functions. We also denote $\mathcal{B}(\bX)$ to be the Borel $\sigma$-algebra on $\bX$ and ${\cal M}(\bX)$ as the vector space of real-valued measures on ${\cal B}(\bX)$ and $m(\cdot)$ denotes Lebesgue measure. Let $\mathcal{L}_{\infty}(\bX)$ and $\mathcal{L}_1(\bX)$ be the space of essentially bounded and integrable functions on $\bX$ respectively. Also, we represent $\bar{\bX} := \bX\setminus \mathcal{N}_{\eta}$ where $\mathcal{N}_{\eta}$ represents a small neighborhood of $\eta$ radius around $\bX_T$. The notation $\nabla$ denotes $[\frac{\partial}{\partial x_1},\dots,\frac{\partial}{\partial x_n}]^\top$ the column vector.

\subsection{Analytical Construction of Density Function for Safe Navigation}\label{section:density_function}

This paper is about safe control design for control affine dynamical system of the form 
\begin{align}
\dot \bx=\bff(\bx)+\bg(\bx)\bu\label{eq:drift_syst1}
\end{align}
where $\bx\in \bX\subseteq \mR^n, \bu\in \mR^m$. The vector field $\bff$ and $\bg=(\bg_1,\ldots,\bg_m)$ are assumed to be atleast $\cC^1(\bX)$. Following is the definition of a weaker almost everywhere (a.e.) navigation problem.  
\begin{definition}\label{problema.e.navigation}(Almost everywhere (a.e.) safe navigation) The a.e. safe navigation problem consists of steering the trajectories of system (\ref{eq:drift_syst1}) starting from almost all (w.r.t. Lebesgue measure) initial conditions from the initial set $\bX_0$ to the target set $\bX_T$ while avoiding the unsafe set $\bX_u$. 
\end{definition}
In \cite{zheng2023safe}, analytical construction of density function, $\rho(\bx)$ for solving the a.e. safe navigation problem for a single integrator is provided. In particular, with integrator system dynamics of the form $ \dot \bx=\bu$, a feedback control input as the positive gradient of the density function, i.e., $\bu=\nabla \rho(\bx)^\top$ is proposed for solving a.e. safe navigation problem as defined in Definition~\ref{problema.e.navigation}. In this paper, one of the main contributions is extending the use of the density function for solving a.e. navigation problem for nonlinear systems with nonzero drift as given in  (\ref{eq:drift_syst1}). In the following discussion, we provide the details of the analytically constructed density function as proposed in \cite{zheng2023safe}.

Let there be $K$ number of obstacles. The obstacle sets are characterized using a continuous function, $c_k(\bx)$, whose level sets form connected components. In particular, we define
\begin{align}
    \bX_{u_k} := \{\bx\in \bX: c_k(\bx) \le 0\} \label{eq:obs_set}
\end{align}
Hence, the unsafe region is given by $\bX_u := \bigcup_{k=1}^K \;\bX_{u_k}$. Next, we define a sensing region, $\bX_{b_k}$, around the unsafe set $\bX_{u_k}$  using a scalar-valued function $b_k(\bx)$, which is again assumed to be continuous with level sets forming connected components.
\begin{align}
    \bX_{b_k} := \{\bx \in \bX:b_k(\bx) \le 0 \} \setminus \bX_{u_k}
    \label{eq:sensing_set}
\end{align}
Now, we will define a smooth inverse bump function by making use of $c_k(\bx)$ and $b_k(\bx)$. We first start by constructing the following functions,
\begin{align*}
    m_k(\bx) &= \frac{c_k(\bx)}{c_k(\bx)-b_k(\bx)},\\
    \psi_k(\bx) &= \frac{\exp(\frac{-1}{m_k(\bx)})}{\exp(\frac{-1}{m_k(\bx)})+\exp(\frac{-1}{1-m_k(\bx)})}.
\end{align*}
Using the functions $m_k(\bx)$ and $\psi_k(\bx)$, we define a piece-wise smooth inverse bump function $\Psi_k(\bx)$ as follows:
\begin{align}\label{eq:bump}
    &\Psi_k(\bx) = \begin{cases}
        0, &\bx \in \bX_{u_k} \\
        \psi_k(\bx), & \bx \in \bX_{b_k} \\
        1, & \text{otherwise}
    \end{cases}.
\end{align}
Note that $\bPsi(\bx)=\prod_{k=1}^K\Psi_k(\bx)$ encodes the information of the unsafe set $\bX_{u}$. Next,  we will use a distance function $D(\bx)$, utilizing the current state and the target state, to incorporate the information of the target set $\bX_T$ in the density function. The $D(\bx)$ can be modified to adjust to the geometry of the underlying configuration space. In the Euclidean space with $\bx \in \mathds{R}^n$, we use $D(\bx) = \bx^\top P \bx$, for some $P>0$, where the target set is assumed to be at the origin.  Therefore, the density function $\rho(\bx)$ for the safe control is given as follows:
\begin{eqnarray}
    \rho(\bx) = \frac{\bPsi(\bx)}{D(\bx)^{\alpha}} \label{eq:rho_S}
\end{eqnarray}
for some $\alpha >0$. The following theorem from \cite{zheng2023safe} provides the solution for a.e. safe navigation for $\dot{\bx}=\bu$.
\begin{theorem}[\cite{zheng2023safe}]
    Consider $\rho(\bx)$ given in \eqref{eq:rho_S}. Then the safe control given by $\bu(\bx)=\nabla \rho(\bx)$ is the solution to the a.e. safe navigation as stated in Definition~\ref{problema.e.navigation} for $\dot{\bx}=\bu$.
\end{theorem}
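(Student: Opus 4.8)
The plan is to treat the closed-loop system $\dot\bx = \nabla\rho(\bx)$ as a gradient-ascent flow and to use $\rho$ itself as the certificate. First I would record the three structural properties of $\rho=\bPsi/D^\alpha$ that drive everything: (i) $\rho\ge 0$ on $\bX$, with $\rho=0$ exactly on the unsafe set $\bX_u$ (because $\bPsi$ vanishes there), and moreover $\bPsi$ together with all of its derivatives vanishes on $\partial\bX_u$ by the inverse-bump construction; (ii) $\rho$ is $\cC^1$ on $\bX\setminus\bX_T$; and (iii) $\rho(\bx)\to+\infty$ as $\bx\to\bX_T=\{0\}$, since $D(\bx)=\bx^\top P\bx\to 0^+$ while $\bPsi\to 1$ on a neighborhood of the (safe, interior) target. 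Thus $\rho$ peaks at the target and is pinned to zero at the obstacles.

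Next I would establish safety. Along any trajectory, $\frac{d}{dt}\rho(\bx(t))=\nabla\rho(\bx)^\top\dot\bx=\|\nabla\rho(\bx)\|^2\ge 0$, so $\rho$ is nondecreasing. For an initial condition $\bx_0$ in the safe region with $\rho(\bx_0)=c>0$, the trajectory is therefore confined to the superlevel set $\{\bx:\rho(\bx)\ge c\}$, which is forward invariant and disjoint from $\bX_u$ (where $\rho=0<c$); hence the trajectory never enters $\bX_u$. The vanishing of $\nabla\rho$ on $\partial\bX_u$ makes that surface a set of equilibria, reinforcing that trajectories may at most approach but never cross into the obstacles.

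For convergence I would use a LaSalle-type argument. On $\bar{\bX}=\bX\setminus\mathcal{N}_\eta$ the function $\rho$ is bounded above (since $D$ is bounded below away from the origin and $\bPsi\le 1$), while $\rho$ increases monotonically along the flow; hence $\rho(\bx(t))$ converges and $\|\nabla\rho(\bx(t))\|^2=\dot\rho\to 0$. Consequently every bounded trajectory either enters the target neighborhood $\mathcal{N}_\eta$ or accumulates on the critical set $\{\nabla\rho=0\}$ inside $\bar{\bX}$. Because the strictly maximizing behavior of $\rho$ occurs only at the target singularity, the only obstruction to reaching $\bX_T$ is convergence to a non-target critical point (a saddle of $\rho$ or an equilibrium on $\partial\bX_u$).

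The hard part, and the step that justifies the ``a.e.'' qualifier, is showing that the set of initial conditions captured by these non-target critical points has Lebesgue measure zero. My plan here is to argue that each such point is a saddle of the smooth potential $\rho$, so by the stable-manifold theorem its basin is contained in a stable manifold of dimension strictly less than $n$, hence is Lebesgue-null, and a countable union of such manifolds remains null. An alternative I would keep in reserve is the occupancy/divergence route: using the density interpretation of $\rho$ to show, via a Rantzer-type divergence integral estimate, that the measure of trajectories failing to leave any neighborhood of a non-target equilibrium is zero. Either way, discharging the genericity/nondegeneracy of the saddles (or the divergence positivity) is the main technical obstacle; the safety and monotone-convergence parts are comparatively routine once the structural properties of $\rho$ are in hand. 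Combining the measure-zero exceptional set with the safety argument then yields a.e. safe navigation for $\dot\bx=\bu$ as claimed.
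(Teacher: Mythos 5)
Your safety half is fine and, for this special case, is arguably cleaner than what the paper does: along $\dot\bx=\nabla\rho$ one has $\tfrac{d}{dt}\rho(\bx(t))=\|\nabla\rho(\bx(t))\|^2\ge 0$, so superlevel sets of $\rho$ are forward invariant and a trajectory starting where $\rho>0$ can never reach $\bX_u$, where $\rho=0$. The genuine gap is in the convergence half, precisely at the step you yourself flag. Your argument reduces ``a.e.\ convergence to $\bX_T$'' to the claim that the union of basins of the non-target critical points of $\rho$ is Lebesgue-null, and you propose to get this from the stable manifold theorem. That requires every critical point of $\rho$ in $\bar\bX$ to be a nondegenerate saddle --- in particular that $\rho=\bPsi/D^{\alpha}$ has no local maxima and no degenerate critical points (nor critical continua) away from the target. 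Nothing in the construction guarantees this: critical points necessarily occur in the sensing annuli where $\bPsi$ interpolates between $0$ and $1$, and their Morse type depends on the obstacle geometry, the choice of $c_k,b_k$, and $\alpha$. Without discharging that nondegeneracy (or replacing it by a Sard/genericity argument), the measure-zero claim --- which is the entire content of the ``a.e.'' qualifier --- is not established, so the proof is incomplete.

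The route you keep ``in reserve'' is in fact the one the paper takes (see the Appendix proof of Theorem~\ref{thm:11}, which the authors state follows the proof of the cited result). There, one never classifies critical points: writing $\nabla\cdot(\bF\rho)=h_0$ with $h_0\ge 0$ on $\bar\bX$ and $h_0\ge\lambda>0$ on $\bX_0$, the density is represented as an occupancy integral $\rho(\bx)=\int_0^\infty[\mathbb{P}_t h_0](\bx)\,dt$ via the Perron--Frobenius operator. Finiteness of this integral forces $[\mathbb{P}_t\mathds{1}_{\bX_0}](\bx)\to 0$, which yields $m\{\bx\in\bX_0:\lim_{t\to\infty}s_t(\bx)\neq 0\}=0$, and $\rho\equiv 0$ on $\bX_u$ forces the total occupancy of $\bX_u$ from $\bX_0$ to vanish, whence (Lemma~\ref{lemma_1}) the set of initial conditions ever entering $\bX_u$ is null. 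The price of that approach is verifying the divergence inequality $\nabla\cdot(\rho\nabla\rho)\ge 0$ a.e.\ (with strict positivity on $\bX_0$) for the constructed $\rho$, which is itself nontrivial and is part of the content of the cited theorem; but it is what lets the argument bypass the saddle-point analysis on which your version currently founders.
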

Fig. \ref{fig:density_function}a shows an environment with one obstacle set $\bX_u$ and target $\bX_T$. Fig. \ref{fig:density_function}b shows the corresponding density function representation. Note that the density function $\rho(\bx)$ takes minimum value for $\bx \in \bX_u$ and maximum value for $\bx \in \bX_T$.
\begin{figure}[ht]
  \centering
  \includegraphics[width=1\linewidth]{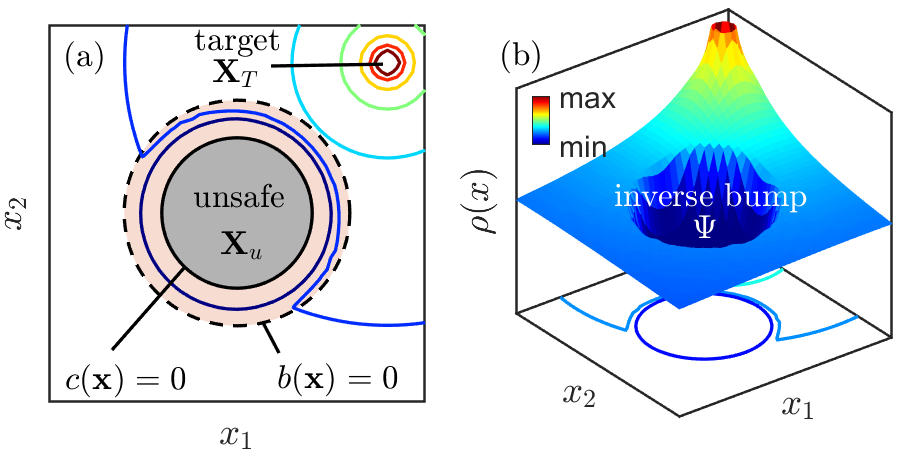}
\caption{(a) Environment setup with unsafe set $\bX_u$ and target $\bX_T$, (b) density function $\rho(\bx)$ for this environment.}
\label{fig:density_function}
\end{figure}
\section{Control Density Function (CDF) for Safety}\label{Section_cdf}

This section introduces the CDF for solving motion planning problems with safety constraints. This will be along the lines of using CLF \cite{artstein1983stabilization,sontag1983lyapunov} and CBF \cite{wieland2007constructive,ames2019control} for stabilization and safe navigation, respectively. However, unlike the Lyapunov and barrier functions, the safety and convergence requirements can be combined using a single density function. 

The safe navigation problem using the density function approach is, in fact, a co-design problem of jointly finding a feedback controller $k(\bx)$ and a density function $\rho(\bx)$. One of the main advantages of using the density function over the barrier function as a safety certificate is that the co-design problem of jointly finding the density function for safety and the feedback controller is convex. This convexity property is exploited in the work of \cite{moyalan2023convex,yu2022data} to design the safety controller. Although the co-design problem is convex, solving this convex problem involves parameterization of the density function and controller using some basis functions. Determining appropriate parameterization is challenging, especially for systems with large dimensional state space. In this paper, we take a different approach where we make the use of analytically constructed density function as proposed in \cite{zheng2023safe} (Eq. \ref{eq:rho_S}) to determine the safety controller for the system with non-trivial drift terms. The main contribution is to show that the a.e. safe navigation problem as defined in Definition \ref{problema.e.navigation} can be formulated as a quadratic programming problem using the density function given in (\ref{eq:rho_S}). 
Following is the first main result of this paper.

\begin{theorem}\label{thm:11} Let the density function $\rho(\bx)$ be as given in (\ref{eq:rho_S}).
The a.e. navigation problem as stated in Definition \ref{problema.e.navigation} is solvable for the control system (\ref{eq:drift_syst1}) if there exists a controller $\bu=\bk(\bx)$ and $\lambda>0$ satisfying the following conditions
\begin{subequations}
    \begin{align}
&\nabla\cdot((\bff +\bg \bk)\rho)\geq 0,\;\;a.e.\;\bx\in \bar \bX\label{condition1}\\
&\nabla\cdot((\bff +\bg \bk)\rho)\geq \lambda>0,\;\;\;\bx\in \bX_o\label{condition2}
\end{align}
\end{subequations}

\end{theorem}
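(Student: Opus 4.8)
Write $\bF := \bff + \bg\bk$ for the closed-loop vector field and let $\phi_t$ denote its flow. The plan is to exploit the occupancy/transport interpretation of $\rho$ that links the divergence conditions \eqref{condition1}--\eqref{condition2} to the evolution of Lebesgue measure along $\phi_t$, in the spirit of Rantzer's dual Lyapunov theorem \cite{rantzer2004analysis}. First I would record the Liouville transport identity: for any measurable $A\subseteq\bar\bX$ on whose forward orbit the flow is defined,
\[
\frac{d}{dt}\int_{\phi_t(A)}\rho\, dm \;=\; \int_{\phi_t(A)}\nabla\cdot(\rho\bF)\, dm,
\]
which follows by differentiating the change-of-variables formula $\int_{\phi_t(A)}\rho\,dm=\int_A \rho(\phi_t)\,\lvert\det D\phi_t\rvert\,dm$ and using $\tfrac{d}{dt}\det D\phi_t=(\nabla\cdot\bF)\det D\phi_t$. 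This identity is the bridge converting the pointwise inequalities on the divergence into monotonicity of the $\rho$-weighted measure of a trajectory tube.

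Second, for convergence I would argue by contradiction via the monotonicity furnished by the transport identity. Suppose a positive-measure set $E\subseteq\bX_0$ of initial conditions yields trajectories trapped in $\bar\bX$ for all time (never entering the target neighborhood $\mathcal{N}_{\eta}$). Condition \eqref{condition1} makes $t\mapsto\int_{\phi_t(E)}\rho\,dm$ nondecreasing and, since $\rho=\bPsi/D^\alpha$ is bounded on $\bar\bX$ (its only singularity, at the target, is excised by removing $\mathcal{N}_{\eta}$) and $\rho\bF\in\mathcal{L}_1(\bar\bX)$, it is bounded above; hence it converges and its derivative $\int_{\phi_t(E)}\nabla\cdot(\rho\bF)\,dm$ tends to $0$. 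But the strict lower bound $\lambda>0$ supplied by \eqref{condition2} on the tube's overlap with the strict-positivity region prevents this derivative from vanishing while the trajectories remain trapped, a contradiction. I would therefore conclude $m(E)=0$, i.e. almost every trajectory from $\bX_0$ reaches $\mathcal{N}_{\eta}$; since $\eta>0$ is arbitrary, this yields a.e. convergence to $\bX_T$.

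Third, for safety I would use the occupancy reading of $\mu:=\rho\,dm$ directly, which is the physical interpretation the construction \eqref{eq:rho_S} is designed around. Since $\bPsi\equiv 0$ on $\bX_u$ we have $\rho\equiv0$ there, hence $\mu(\bX_u)=0$; the divergence structure \eqref{condition1} identifies $\mu$ (up to normalization) with an occupation measure of the closed-loop flow, so that $\int_0^\infty \mathds{1}_{\bX_u}(\phi_t(\bx))\,dt$ integrates to zero against the initial density. Consequently almost every trajectory spends zero Lebesgue time in $\bX_u$; because each obstacle $\bX_{u_k}=\{c_k\le0\}$ has nonempty interior, zero residence time is equivalent to never entering $\bX_u$, which is exactly safety. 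Combining this with the convergence conclusion gives a.e. safe navigation in the sense of Definition~\ref{problema.e.navigation}.

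Finally, the delicate points. The crux, and the step I expect to be hardest, is reconciling the non-strict global bound \eqref{condition1} with the strict but localized bound \eqref{condition2}: making the derivative-vanishing argument bite requires that trapped trajectories keep revisiting the region where the divergence is at least $\lambda$, so the precise domain on which \eqref{condition2} is imposed, and its relation to $\bX_0$ and the recurrent part of $\bar\bX$, must be handled with care. A second obstacle is that $\rho$ fails to be $\cC^1$ where the piecewise inverse bump $\bPsi$ meets the sensing boundary and vanishes on $\bX_u$, so the transport and occupancy identities must be read in the a.e./weak sense, which is precisely what the ``$a.e.$'' qualifier in \eqref{condition1} accommodates. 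A subsidiary point is forward-completeness of $\phi_t$ on $\bar\bX$ together with the fact that trajectories do not escape through $\partial\bX$, which I would obtain from the $\cC^1$ regularity of $\bff,\bg$ and the structure of $\rho$.
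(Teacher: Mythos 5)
Your route is not the one the paper takes for this theorem. The paper proves Theorem~\ref{thm:11} through the occupancy representation: it sets $\nabla\cdot((\bff+\bg\bk)\rho)=h_0$ with $h_0\ge 0$ on $\bar\bX$ and $h_0\ge\lambda$ on $\bX_0$, writes $\rho(\bx)=\int_0^\infty[\mathbb{P}_t h_0](\bx)\,dt$ via the method of characteristics for the Perron--Frobenius semigroup, and then reads both conclusions off this single identity: boundedness of $\rho$ on $\bar\bX$ forces $[\mathbb{P}_t\mathds{1}_{\bX_0}](\bx)\to 0$ (convergence), and $\rho\equiv 0$ on $\bX_u$ forces $\int_0^\infty\int_{\bX_0}\mathds{1}_{\bX_u}(s_t(\bx))\,d\bx\,dt=0$, which Lemma~\ref{lemma_1} upgrades to zero residence time (safety). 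Your Liouville tube-monotonicity identity is instead exactly the tool the paper reserves for its time-varying Lemma~\ref{thm:1}, where the convergence half is outsourced to a cited result of Masubuchi rather than argued from the monotonicity.

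The gap is in your convergence argument, and you flag the failure point yourself without resolving it. Condition \eqref{condition2} gives the strict lower bound $\lambda$ only on $\bX_0$; once the tube $\phi_t(E)$ leaves $\bX_0$, condition \eqref{condition1} only gives $\ge 0$, so the derivative $\int_{\phi_t(E)}\nabla\cdot(\rho\bF)\,dm$ can legitimately tend to zero while the trajectories remain trapped in $\bar\bX\setminus\bX_0$ (for instance, accumulating on a closed-loop invariant set away from the target). Nothing in the hypotheses forces trapped trajectories to revisit $\bX_0$, so the claim that the strict bound ``prevents the derivative from vanishing'' is unsupported; at best you get positivity of the derivative at $t=0$, which contradicts nothing. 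This is precisely what the occupancy representation buys the paper: from $\rho\ge\lambda\int_0^\infty \mathbb{P}_t\mathds{1}_{\bX_0}\,dt$ and finiteness of $\rho$ on $\bar\bX$, the expected residence time of trajectories from $\bX_0$ in any subset of $\bar\bX$ is finite, which rules out trapping with no recurrence argument needed. Your safety paragraph has the same hidden dependency: identifying $\rho\,dm$ with an occupation measure is not a consequence of the divergence inequality alone but of the integral representation you never establish; the cleaner fix inside your own framework is the contradiction used in the paper's proof of Lemma~\ref{thm:1} (if $\phi_T(Z)\subset\bX_u$ then $\int_{\phi_T(Z)}\rho\,dm=0<\int_Z\rho\,dm$, contradicting monotonicity of the tube integral). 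As written, the proposal is a workable skeleton for safety but an incomplete argument for convergence.
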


We postpone the proof of this theorem to the Appendix.

The inequalities in (\ref{condition1})-(\ref{condition2}) are linear in $\bu=\bk(\bx)$ for a given density function $\rho(\bx)$. As stated above, solving for $\rho$ and $\bu=\bk(\bx)$ can also be written as an infinite-dimensional convex optimization problem as follows. Let $\bar \brho(\bx):=\rho(\bx)\bk(\bx)$, then the inequalities (\ref{condition1})-(\ref{condition2}) can be written as following linear inequalities in terms of variables $\rho$ and $\bar \brho$
\begin{subequations}
    \begin{align}
&\nabla\cdot(\bff\rho  +\bg \bar \brho)\geq 0,\;\;a.e.\;\bx\in \bar \bX\\
&\nabla\cdot(\bff\rho +\bg \bar\brho)\geq \lambda>0,\;\;\;\bx\in \bX_o
\end{align}
\end{subequations}
 
Although convex, the inequalities are infinite-dimensional and, hence, computationally challenging to solve for global safety controllers. Instead, if we solve for $\bu(\bx)$ along a trajectory for a given initial condition, then the problem of finding the control input $\bu(\bx)$ can be written as a finite-dimensional convex optimization problem. In fact, similar to the quadratic programming (QP) based formulation of safety using the CBF, safe navigation using the CDF can also be formulated as a QP problem. For any given fixed $\bx$, the QP for CDF is written as follows.
\begin{subequations}\label{eq:qp-cdf}
    \begin{align}
        &\min_{\bu}\; \|\bu\|^2  \\
       \text{s.t.}\;\;\;& \nabla \cdot (\bff (\bx)\rho + \bg(\bx)\bu\rho) \ge 0,\;\;a.e.\;\bx\in\bar{\bX}\\ 
    & \nabla \cdot (\bff (\bx)\rho + \bg(\bx)\bu\rho) \ge \lambda > 0,\;\;\forall\;\bx\in\bX_0
\end{align}
\end{subequations}


One of the advantages of the QP formulation given in (\ref{eq:qp-cdf}) is that there is no need for the reference control input as the convergence information with respect to the target set is encoded in the construction of the analytical density function. However, if there exists a nominal control, $\bu_0$, for the system given by \eqref{eq:drift_syst1}, we can reformulate \eqref{eq:qp-cdf} to enforce $\bu_0$ in the absence of unsafe sets by introducing a cost function which minimizes $\|\bu - \bu_0\|^2$, i.e., 
\begin{subequations}\label{eq:qp-cdf1}
    \begin{align}
        &\min_{\bu}\; \|\bu-\bu_0\|^2  \\
       \text{s.t.}\;\;\;& \nabla \cdot (\bff (\bx)\rho + \bg(\bx)\bu\rho) \ge 0,\;\;a.e.\;\bx\in\bar{\bX}\\ 
    & \nabla \cdot (\bff (\bx)\rho + \bg(\bx)\bu\rho) \ge \lambda > 0,\;\;\forall\;\bx\in\bX_0 
    \end{align}
\end{subequations}



We write the QP in the more general form for our next results which are as follows.
\begin{subequations}\label{QP-CDF_base}
\begin{align}
    &\bu^{\star}(\bx) = \argmin_{\bu} \;\;\;\;\bu^\top \bH(\bx) \bu+ \bJ(\bx)\bu\\
    &\text{s.t.}\;\;\;\nabla \cdot (\bff(\bx)\rho + \bg(\bx)\rho\bu)\ge 0,\;\;a.e.\;\bx\in\bar{\bX} \label{constraint_1}\\
    & \;\;\;\;\;\;\;\nabla \cdot (\bff (\bx)\rho + \bg(\bx)\bu\rho) \ge \lambda > 0,\;\;\forall\;\bx\in\bX_0\label{constraint_2}
\end{align}
\end{subequations}

Here, $\bu = [u_1,\dots,u_m]^\top, \bH(\bx) \in \mathds{R}^{m \times m}$ is some positive matrix for fixed $\bx$, and $\bJ(\bx) \in \mathds{R}^{m}$. The next theorem proves that the solution of the QP problem will lead to a feedback controller satisfying $\mathcal{C}^1$ continuity property.

\begin{theorem}\label{theorem_statefeedback}
Assume that the vector field $\bff$, $\bg$, and the matrix valued function $\bH(\bx)$ and $\bJ(\bx)$ in (\ref{QP-CDF_base}) are atleast $\cC^1$ function of $\bx$.
Then the solution $\bu^{\star}(\bx)$ obtained by solving \eqref{QP-CDF_base} is $\mathcal{C}^1$ continuous for almost all $\bx \in \bar{\bX}$ and is feedback.
\end{theorem}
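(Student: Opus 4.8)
The plan is to reduce \eqref{QP-CDF_base} to a strictly convex quadratic program with a single affine inequality constraint, solve it in closed form through the KKT conditions, and then read off continuity, differentiability, and the feedback property directly from the resulting explicit formula.

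First I would expand the divergence appearing in the constraint. For a fixed $\bx$ the decision variable $\bu$ is a constant vector, so
\begin{align*}
\nabla\cdot(\bff(\bx)\rho + \bg(\bx)\rho\bu) = \nabla\cdot(\bff\rho) + \sum_{j=1}^m u_j\,\nabla\cdot(\bg_j\rho) =: a(\bx) + \bb(\bx)^\top\bu ,
\end{align*}
where $a(\bx) := \nabla\cdot(\bff(\bx)\rho(\bx))$ and $\bb(\bx) := [\nabla\cdot(\bg_1\rho),\dots,\nabla\cdot(\bg_m\rho)]^\top$. Thus \eqref{constraint_1}--\eqref{constraint_2} are affine in $\bu$, and at any fixed $\bx$ only one of them is relevant (the $\ge\lambda$ version on $\bX_0$, the $\ge 0$ version elsewhere), so the problem is a strictly convex QP in $\bu$ with a single affine inequality. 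Since $\bH(\bx)\succ 0$ for each $\bx$, the objective is strictly convex and the feasible half-space is convex, so the minimizer is unique whenever the problem is feasible. This uniqueness is exactly what makes $\bu^{\star}$ a single-valued static map of the state, i.e. a feedback law: all the data $\bff,\bg,\rho,\bH,\bJ$ depend on $\bx$ alone, hence so does $\bu^{\star}(\bx)$.

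Next I would solve the KKT system. With a multiplier $\mu\ge 0$ for the active constraint, the Lagrangian $\bu^\top\bH\bu + \bJ\bu - \mu(a+\bb^\top\bu)$ gives the stationarity condition $\bu = \tfrac12\bH^{-1}(\mu\bb - \bJ^\top)$ together with complementary slackness $\mu(a+\bb^\top\bu)=0$. This yields two branches: the \emph{inactive} branch $\bu^{\star} = -\tfrac12\bH^{-1}\bJ^\top$, valid on the region where this already satisfies the constraint; and the \emph{active} branch, where $a+\bb^\top\bu=0$ forces
\begin{align*}
\mu^{\star}(\bx) = \frac{\bb^\top\bH^{-1}\bJ^\top - 2a}{\bb^\top\bH^{-1}\bb}, \qquad \bu^{\star} = \tfrac12\bH^{-1}\!\left(\mu^{\star}\bb - \bJ^\top\right).
\end{align*}
The denominator $\bb^\top\bH^{-1}\bb$ is strictly positive wherever the constraint gradient $\bb(\bx)\ne 0$, so on the interior of each region $\bu^{\star}$ is a composition of $\cC^1$ operations (inversion of the positive matrix $\bH$, products, and division by a nonvanishing scalar) applied to the data $a,\bb,\bH,\bJ$. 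Verifying that these coefficients are themselves $\cC^1$ in $\bx$ — using that $\rho$ in \eqref{eq:rho_S} is smooth on $\bar{\bX}$ and that $\bff,\bg,\bH,\bJ$ are $\cC^1$ — then makes $\bu^{\star}$ a $\cC^1$ function of $\bx$ on the interior of each region.

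The hard part is the regularity across the surface separating the two branches. The two branch formulas agree precisely on the switching set $\Sigma = \{\bx : \mu^{\star}(\bx)=0\}$, where strict complementarity fails; there $\bu^{\star}$ is continuous but its gradient may jump, so $\bu^{\star}$ can fail to be $\cC^1$ only on $\Sigma$ together with the degenerate locus $\{\bx : \bb(\bx)=0\}$ where the constraint qualification breaks down. I would finish by arguing that these exceptional sets have Lebesgue measure zero — invoking the implicit function theorem to exhibit $\Sigma$ as a codimension-one (hence null) level set of a $\cC^1$ function with nonvanishing gradient off a smaller degenerate set, and similarly that $\{\bb=0\}$ is lower-dimensional. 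Combined with the $\cC^1$ smoothness on each open region, this yields that $\bu^{\star}$ is $\cC^1$ for almost every $\bx\in\bar{\bX}$ and is feedback, as claimed. The main obstacle, and the step deserving the most care, is precisely this null-set argument for $\Sigma$ and the non-degeneracy $\bb\ne 0$ on its complement.
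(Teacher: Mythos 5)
Your overall strategy --- reduce the QP to a single-constraint strictly convex program, write the KKT solution in closed form with an active and an inactive branch, and read off $\mathcal{C}^1$ regularity away from the switching surface --- is essentially the route the paper takes: its function $\omega(\bar q)/\langle\bar\by,\bar\by\rangle$ is exactly your two-branch multiplier, and your closed-form $\bu^{\star}$ is the analogue of the paper's \eqref{vstar_solution1}. In fact you are more careful than the paper on two points it glosses over: the paper asserts a.e.\ $\mathcal{C}^1$ continuity without arguing that the failure set $\{\bar q(\bx)=0\}$ is Lebesgue-null, and it never addresses the degenerate locus where the constraint gradient vanishes (your $\{\bb(\bx)=0\}$, its $\langle\bar\by,\bar\by\rangle=0$); your implicit-function-theorem sketch is the right way to close both holes, though as you note it still needs a nondegeneracy hypothesis to actually go through.

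There is, however, one genuine discrepancy at the very first step. You expand the constraint as $\nabla\cdot(\bff\rho+\bg\rho\bu)=a(\bx)+\bb(\bx)^\top\bu$ on the grounds that ``for a fixed $\bx$ the decision variable $\bu$ is a constant vector.'' This silently discards the term $\sum_{j}\nabla u_j\,\bg_j\rho$, i.e.\ the spatial gradient of the control law itself. The paper insists this term is present --- its expansion \eqref{eq:expansion} retains it explicitly, Section \ref{section_QPfinite} exists precisely to approximate it, and the text singles it out as ``one of the main differences between the QP for CDF vs.\ QP for CBF.'' The paper's own proof handles it by the substitution $\bar\rho_i=\rho u_i$ and by writing the constraint through the generator $\mathcal{P}_{g_i}\bar\rho_i=-\nabla\cdot(\bg_i\rho u_i)$, so that the full divergence (including $\nabla u_i$) is carried along. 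Your reduction therefore solves a different, purely pointwise QP: you cannot treat $\bu$ as constant in $\bx$ when forming the constraint and then conclude that the resulting state-dependent $\bu^{\star}(\bx)$ satisfies the original constraint \eqref{constraint_1}, which involves $\nabla\cdot(\bg\rho\,\bu^{\star}(\bx))$. To repair this you would either need to justify dropping the gradient term (which the paper does not do and which changes the feasible set), or work in the variable $\bar\brho=\rho\bu$ as the paper does and confront the fact that the ``linear functional'' in the constraint is a differential operator rather than a finite-dimensional inner product --- a point on which, admittedly, the paper's own argument is also not rigorous.
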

\[\]
We postpone the proof of this theorem to the Appendix. 
One of the main differences between the QP for CDF vs. QP for CBF is that the constraints in the QP-CDF involve the spatial derivative of the control input term. In  Section \ref{section_QPfinite}, we present an approach to approximate the spatial derivative.

 \section{Robust safety and convergence}\label{Section_robust}

In this section, we propose using the control density function for robust safety. We consider robustness against uncertainty in system dynamics and the initial condition. 

\subsection{Uncertainty in System Dynamics}
Consider an uncertain dynamical system of the form 
 \begin{align}
    \dot{\bx} = \bff(\bx)  + \bg(\bx)\bu +  \bff_\delta(\bx,t) \label{eq:sys_dyn_uncertainty}
\end{align}
where $\bff_\delta$ denotes the perturbation term, possibly time-varying, modeling the uncertainty in the system dynamics. The following assumptions are made on the uncertain term, $\bff_\delta$. 
\begin{assumption}\label{assume_perturbation}
We assume that the perturbation satisfies the following bounds 
\begin{align}
\|\bff_{\delta}(\bx,t)\|\leq c_{\delta_1},\;\;\;|\nabla\cdot(\bff_\delta(\bx,t))|\leq c_{\delta_2} ,\;\;\;\bx\in \bX,\;\;\;\forall t\geq 0. 
\end{align}
for some positive constants $c_{\delta_1}$ and $c_{\delta_2}$. 
\end{assumption}
\begin{assumption}\label{assume:3}
    For $\bx \in \bar \bX$, we assume the following bounds

\begin{align}
\left\|\frac{\nabla D(\bx)}{D(\bx)}\right\|\leq c_{\partial D},\;\;\left\|\frac{\partial \bPsi(\bx)}{\partial \bx}\right\|\leq c_{\Psi} \bPsi(\bx), 
\end{align}
for some positive constant $c_{\partial D}$ and $c_\Psi$. Note that outside the transition region, we have $\frac{\partial \bPsi}{\partial \bx}=0$. 
\end{assumption}

To prove the results on robust safety, we first need to prove the following Lemma for a more general time-varying system of the form 
\begin{align}
\dot \bx=\bF(\bx,t)\label{time_varying}
\end{align}
where $\bx\in \bX$ and $\bF$ is assumed to be atleast $\cC^1$ function of $\bx$ for any fixed $t$. 
Unlike Theorem~\ref{thm:11} and its proof, which are based on the occupancy-based interpretation of the density function, the following Lemma is proved using Liouville's theorem \cite{masubuchi2021lyapunov,rantzer2004analysis} for the time-varying systems which involve utilizing the Liouville equation in the integral form used for the evolution of densities.

\begin{lemma}\label{thm:1} Consider the dynamical system (\ref{time_varying}) with density function as given in (\ref{eq:rho_S}). For all $t \ge 0$, if following inequalities are satisfied 
\begin{subequations}\label{theorem_inequalities}
    \begin{align}
&\nabla\cdot(\bF(\bx,t)\rho)\geq 0,\;\;a.e.\;\bx\in \bar \bX\\
&\nabla\cdot(\bF(\bx,t)\rho)\geq \lambda>0,\;\;\;\bx\in \bX_0
\end{align}
\end{subequations}

then almost every initial condition from $\bX_0$ will be steered to the target set $\bX_T$ while avoiding the unsafe set $\bX_u$.
\end{lemma}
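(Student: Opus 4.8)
The plan is to prove the Lemma through the integral (Liouville) form of the transport equation for the time\-varying flow, which is the natural rigorous counterpart of the occupancy viewpoint used elsewhere in the paper. I would write $\phi_{t,\tau}$ for the two\-parameter flow of \eqref{time_varying}, so that $\frac{d}{dt}\phi_{t,\tau}(\bx)=\bF(\phi_{t,\tau}(\bx),t)$ and $\phi_{\tau,\tau}=\mathrm{id}$. Differentiating the change\-of\-variables formula $\int_{\phi_{t,0}(A)}\rho\,d\bx=\int_A \rho(\phi_{t,0}(\bx))\,|\det D\phi_{t,0}(\bx)|\,d\bx$ in $t$, and using Liouville's formula $\frac{d}{dt}\det D\phi_{t,0}=(\nabla\cdot\bF)\,\det D\phi_{t,0}$ together with $\nabla\rho\cdot\bF+\rho\,\nabla\cdot\bF=\nabla\cdot(\rho\bF)$, yields the identity
\begin{equation}\label{eq:transport}
\int_{\phi_{t,0}(A)}\rho\,d\bx-\int_{A}\rho\,d\bx=\int_0^t\!\!\int_{\phi_{\tau,0}(A)}\nabla\cdot(\bF(\cdot,\tau)\rho)\,d\bx\,d\tau
\end{equation}
for every measurable $A$ whose forward images remain in the domain of $\rho$. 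Before invoking \eqref{eq:transport} I would record two structural facts about $\rho=\bPsi/D^{\alpha}$ from \eqref{eq:rho_S}: it is integrable on $\bar\bX$ (there $\bPsi\le 1$ and $D$ is bounded away from $0$, so $\int_{\bar\bX}\rho\,d\bx<\infty$), whereas $D\to 0$ forces $\rho\to\infty$ at $\bX_T$; and $\rho\equiv 0$ on $\bX_u$ since $\bPsi$ vanishes there. These are exactly the properties that make $\rho$ behave as a barrier at $\bX_u$ and a sink at $\bX_T$.

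For safety I would show that the set of initial conditions reaching $\bX_u$ before $\bX_T$ is Lebesgue\-null. Suppose $A\subseteq\bX_0$ has $m(A)>0$ and every trajectory from $A$ enters $\bX_u$ at some common time $t^\star$ without previously entering $\mathcal{N}_\eta$; then $\phi_{\tau,0}(A)\subseteq\bar\bX$ for $\tau\in[0,t^\star]$, so the first inequality in \eqref{theorem_inequalities} makes the right\-hand side of \eqref{eq:transport} nonnegative, giving $\int_{\phi_{t^\star,0}(A)}\rho\,d\bx\ge\int_A\rho\,d\bx>0$. But $\phi_{t^\star,0}(A)\subseteq\bX_u$, where $\rho\equiv 0$, forces $\int_{\phi_{t^\star,0}(A)}\rho\,d\bx=0$, a contradiction. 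Decomposing a generic hitting set by rational hitting times and using $\sigma$\-additivity then upgrades this to the a.e.\ avoidance statement.

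For convergence I would run a Rantzer\-type forward\-invariance argument. Fixing $\eta>0$, let $Z$ collect the states whose forward trajectory never meets $\mathcal{N}_\eta$; this set is forward invariant, so $\phi_{t,0}(Z)\subseteq Z\subseteq\bar\bX$, and injectivity of $\phi_{t,0}$ with $\rho\ge 0$ gives $\int_{\phi_{t,0}(Z)}\rho\,d\bx\le\int_Z\rho\,d\bx$. On the other hand, \eqref{eq:transport} with nonnegative divergence gives the reverse inequality, so both sides are equal and the space\-time integral of $\nabla\cdot(\bF\rho)$ over the orbit tube of $Z$ vanishes. Since that integrand is strictly positive where the hypotheses force it to be (on $\bX_0$ by the $\lambda>0$ bound, and a.e.\ on $\bar\bX$ in the non\-degenerate regime), the orbits of $Z$ can carry no positive Lebesgue mass, i.e.\ $m(Z)=0$; intersecting over a sequence $\eta\to 0$ and using $\int_{\bar\bX}\rho<\infty$ to rule out trapping then yields that almost every point of $\bX_0$ is steered into $\bX_T$.

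The main obstacle, and the step I would treat most carefully, is the convergence contradiction rather than safety. Two issues must be reconciled: first, the two\-parameter (time\-varying) flow means the "bad set'' $Z$ and its forward invariance must be phrased through $\phi_{t,0}$ directly (or via the time\-augmented autonomous system $\dot s=1$, for which the augmented divergence equals $\nabla\cdot(\bF\rho)$), so that \eqref{eq:transport} and the subset\-monotonicity of $\int_{\cdot}\rho$ apply simultaneously. Second, the vanishing of the space\-time divergence integral only excludes trapped orbits when $\nabla\cdot(\bF\rho)$ is \emph{strictly} positive a.e.\ on the region they occupy; the strictly positive $\lambda$\-margin on $\bX_0$ alone is consumed in an instant of zero duration once orbits leave $\bX_0$. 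I would therefore argue that the conclusion genuinely rests on strict positivity of $\nabla\cdot(\bF\rho)$ a.e.\ on $\bar\bX$ (with the $\lambda$ condition supplying the quantitative robustness margin exploited later), and spell out that the nonstrict "$\ge 0$'' as written is used only for the monotonicity that delivers safety, while the convergence half is where the strict inequality and the finiteness $\int_{\bar\bX}\rho<\infty$ do the essential work.
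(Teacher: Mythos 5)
Your safety argument is essentially the paper's own: both start from the integral Liouville identity
\begin{equation*}
\int_{s_t(Z)}\rho\,d\bx-\int_{Z}\rho\,d\bx=\int_0^t\!\!\int_{s_\tau(Z)}\nabla\cdot(\bF(\bx,\tau)\rho)\,d\bx\,d\tau,
\end{equation*}
take $Z$ to be a positive-measure neighborhood of a hypothetical initial condition whose orbit reaches $\bX_u$ while remaining in $\bar\bX$, and derive a contradiction from $\rho\equiv 0$ on $\bX_u$ versus $\rho>0$ on $Z$ and the nonnegativity of the right-hand side. Your additional step of decomposing the hitting set by rational hitting times to pass from ``a single neighborhood'' to the a.e.\ statement is actually more careful than what the paper writes down. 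Where you genuinely diverge is the convergence half: the paper disposes of it in one sentence by citing Theorem 2.2 of the Masubuchi reference, whereas you attempt a self-contained Rantzer-type argument (forward-invariant bad set $Z$, monotonicity of $\int_{\phi_{t,0}(Z)}\rho$, finiteness of $\int_{\bar\bX}\rho$, strict positivity forcing $m(Z)=0$). That buys self-containedness, but it is also where the unresolved work sits, and you say so yourself.

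Two of the issues you flag in that half are real and not merely cosmetic. First, $\phi_{t,0}(Z)\subseteq Z$ does not follow from defining $Z$ as ``states whose forward trajectory from time $0$ never meets $\mathcal{N}_\eta$'' when the vector field is time-varying; the inequality $\int_{\phi_{t,0}(Z)}\rho\le\int_Z\rho$ you want from subset-monotonicity therefore needs either the time-augmented autonomous reformulation you mention or a replacement of that step by uniform boundedness of $\int_{\phi_{t,0}(Z)}\rho$ by $\int_{\bar\bX}\rho<\infty$. Second, the hypotheses of the Lemma give strict positivity of $\nabla\cdot(\bF\rho)$ only on $\bX_0$, and, as you observe, a vanishing (or finite) space-time divergence integral combined with a $\lambda$-margin only on $\bX_0$ shows merely that trapped orbits have finite occupancy of $\bX_0$, not that they converge to $\bX_T$. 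Closing that gap requires either strengthening the divergence condition to strict positivity a.e.\ on $\bar\bX$ or an occupancy/P--F-operator argument of the kind the paper uses for Theorem~\ref{thm:11} (where $\rho=\int_0^\infty \mathbb{P}_t h_0\,dt<\infty$ together with $h_0\ge\lambda$ on $\bX_0$ delivers $\lim_t \mathbb{P}_t\mathds{1}_{\bX_0}=0$). The paper avoids confronting this by outsourcing the convergence claim to the cited theorem; your write-up makes the dependence explicit but does not yet discharge it.
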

The proof of this Lemma is differed to the Appendix. 
Following is the main result of this paper on robust safety with uncertainty in the system dynamics.

\begin{theorem} Consider the uncertain dynamical system given in \eqref{eq:sys_dyn_uncertainty} satisfying Assumptions \ref{assume_perturbation} and \ref{assume:3}. If there exists  $\bu=\bk(\bx)$ satisfying 
\begin{subequations}\label{eq:robust_qp_cdf}
    \begin{align}
&\nabla\cdot(\bff(\bx) \rho+\bg(\bx) \bk(\bx)\rho)\geq \gamma \rho,\;\;a.e.\;\bx\in \bar \bX\label{eq1}\\
&\nabla\cdot(\bff(\bx) \rho+\bg(\bx) \bk(\bx)\rho)\geq \lambda+\gamma \rho > 0,\;\;\;\bx\in \bX_0\label{eq2}
\end{align}
\end{subequations}

where $\gamma=c_{\delta_2}+\alpha c_{\delta_1} c_{\partial D}+c_{\delta_1}c_{\Psi}$, then almost every initial condition from $\bX_0$ will be steered to the target set $\bX_T$ while avoiding the unsafe set $\bX_u$. 
\end{theorem}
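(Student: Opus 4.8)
The plan is to reduce the uncertain, time-varying closed-loop system to the setting of Lemma~\ref{thm:1}. Substituting $\bu=\bk(\bx)$ into \eqref{eq:sys_dyn_uncertainty} yields a system of the form \eqref{time_varying} with $\bF(\bx,t)=\bff(\bx)+\bg(\bx)\bk(\bx)+\bff_\delta(\bx,t)$. Since Lemma~\ref{thm:1} guarantees a.e. safe navigation whenever $\nabla\cdot(\bF\rho)\ge 0$ on $\bar\bX$ and $\nabla\cdot(\bF\rho)\ge\lambda>0$ on $\bX_0$, it suffices to verify these two divergence inequalities for the perturbed field. First I would use linearity of the divergence to split
\[
\nabla\cdot(\bF\rho)=\nabla\cdot\big((\bff+\bg\bk)\rho\big)+\nabla\cdot(\bff_\delta\rho),
\]
so that the nominal part is controlled by the hypotheses \eqref{eq1}--\eqref{eq2} and only the perturbation term $\nabla\cdot(\bff_\delta\rho)$ needs to be bounded.

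The technical heart of the argument is estimating $\nabla\cdot(\bff_\delta\rho)$, and this is where the constant $\gamma$ is forced. Applying the product rule gives $\nabla\cdot(\bff_\delta\rho)=\rho\,\nabla\cdot\bff_\delta+\bff_\delta\cdot\nabla\rho$. For the second term I would compute the logarithmic gradient of the analytic density \eqref{eq:rho_S}, namely
\[
\nabla\rho=\rho\left(\frac{\nabla\bPsi}{\bPsi}-\alpha\frac{\nabla D}{D}\right),
\]
which follows directly from $\rho=\bPsi/D^\alpha$. Taking norms and invoking Assumption~\ref{assume_perturbation} ($\|\bff_\delta\|\le c_{\delta_1}$, $|\nabla\cdot\bff_\delta|\le c_{\delta_2}$) together with Assumption~\ref{assume:3} ($\|\nabla D/D\|\le c_{\partial D}$, and $\|\partial\bPsi/\partial\bx\|\le c_\Psi\bPsi$, so that $\|\nabla\bPsi/\bPsi\|\le c_\Psi$) yields the pointwise bound
\[
|\nabla\cdot(\bff_\delta\rho)|\le \rho\big(c_{\delta_2}+\alpha c_{\delta_1}c_{\partial D}+c_{\delta_1}c_\Psi\big)=\gamma\rho.
\]
This is exactly the definition of $\gamma$ in the statement, so the robustness margin appears precisely as the worst-case divergence the perturbation can contribute per unit density.

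With this bound in hand the conclusion is immediate. On $\bar\bX$, combining \eqref{eq1} with $\nabla\cdot(\bff_\delta\rho)\ge-\gamma\rho$ gives $\nabla\cdot(\bF\rho)\ge\gamma\rho-\gamma\rho=0$, and on $\bX_0$, combining \eqref{eq2} with the same bound gives $\nabla\cdot(\bF\rho)\ge\lambda+\gamma\rho-\gamma\rho=\lambda>0$. Hence $\bF$ satisfies the hypotheses of Lemma~\ref{thm:1}, and the claimed robust a.e. navigation follows. I expect the only delicate point to be the $\rho$-weighted bound on $\|\nabla\bPsi/\bPsi\|$: Assumption~\ref{assume:3} is stated precisely so that the gradient of the inverse bump function is dominated by a multiple of $\bPsi$ itself (vanishing outside the transition region), which keeps the estimate finite as $\bPsi\to 0$ near the unsafe set and lets the $\gamma\rho$ term cancel cleanly rather than blowing up.
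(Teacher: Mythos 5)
Your proposal is correct and follows essentially the same route as the paper: reduce the perturbed closed-loop system to Lemma~\ref{thm:1}, split $\nabla\cdot(\bF\rho)$ by linearity, and bound $|\nabla\cdot(\bff_\delta\rho)|\le\gamma\rho$ via the product rule together with Assumptions~\ref{assume_perturbation} and~\ref{assume:3}. The only cosmetic difference is that you write the gradient term as the logarithmic derivative $\rho\bigl(\nabla\bPsi/\bPsi-\alpha\nabla D/D\bigr)$ while the paper keeps it as $\tfrac{1}{D^\alpha}\tfrac{\partial\bPsi}{\partial\bx}-\alpha\tfrac{\nabla D}{D}\rho$ (avoiding the formal division by $\bPsi$ where it vanishes), but the resulting estimate is identical and you correctly flag this point.
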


\begin{proof}
We need to show that (\ref{eq1})-(\ref{eq2}) implies the following inequality in (\ref{ss})-(\ref{ss1}), the proof then will follow using the results of Lemma \ref{thm:1}.
\begin{subequations}
    \begin{align}
\nabla \cdot((\bff(\bx)+\bg(\bx)\bk(\bx)+\bff_\delta(\bx,t))\rho)&\geq 0, \nonumber \\\;\;a.e.\; \bx\in \bar \bX,\;\forall t \ge 0\label{ss}\\
\nabla \cdot((\bff(\bx)+\bg(\bx)\bk(\bx)+\bff_\delta(\bx,t))\rho)&\geq \lambda>0,\nonumber \\ \bx\in \bX_0,\;\forall t \ge 0\label{ss1}
\end{align}
\end{subequations}
Defining $\bF_c(\bx):=\bff(\bx)+\bg(\bx)\bk(\bx)$, we obtain
\[\nabla\cdot(\bF_c(\bx)\rho+\bff_\delta(\bx,t) \rho)=\nabla\cdot(\bF_c(\bx)\rho)+\nabla\cdot(\bff_\delta(\bx,t)\rho).\]
We have the following estimates for $\nabla\cdot (\bff_\delta(\bx,t) \rho)$.
\[|\nabla\cdot (\bff_\delta(\bx,t) \rho)|\le|\nabla\cdot \bff_\delta(\bx,t)| \rho+|(\nabla \rho)\cdot \bff_\delta(\bx,t)|\]
as $\rho$ is positive. Following Assumption \ref{assume_perturbation}, we have
\begin{align}
    |\nabla\cdot \bff_\delta(\bx,t)|\rho\leq c_{\delta_2}\rho \label{eq:proof_11}
\end{align}
Now,
\[(\nabla\rho)\cdot \bff_\delta(\bx,t)= -\alpha \frac{\nabla D}{D} \bff_\delta(\bx,t) \rho+\frac{1}{D^\alpha} \frac{\partial \Psi}{\partial \bx}\bff_\delta(\bx,t)\]
where $\rho(\bx) = \frac{\bPsi(\bx)}{D(\bx)^{\alpha}}$ as given in \eqref{eq:rho_S}.
Hence,
\begin{align}|(\nabla\rho)\cdot \bff_\delta(\bx,t)|\leq \alpha \left\|\frac{\nabla D}{D}\right\|\left\| \bff_\delta(\bx,t) \right\| \rho+\nonumber\\
\frac{1}{D^\alpha} \left\|\frac{\partial \Psi}{\partial \bx}\right\| \left\|\bff_\delta(\bx,t)\right\|\label{eq3}
\end{align}
Using the bounds from Assumptions (\ref{assume_perturbation}) and (\ref{assume:3}), we obtain
\begin{align}|(\nabla\rho)\cdot \bff_\delta(\bx,t)|\leq (\alpha c_{\delta_1} c_{\partial D}+c_{\delta_1}c_{\Psi}) \rho\label{eq4}
\end{align}
Combining \eqref{eq:proof_11} and (\ref{eq4}), we obtain
\[|\nabla\cdot (\bff_\delta(\bx,t) \rho)|\leq \gamma \rho\]
where $\gamma:=c_{\delta_2}+\alpha c_{\delta_1} c_{\partial D}+c_{\delta_1}c_{\Psi}$. Now
\[\nabla\cdot((\bF_c(\bx)+\bff_\delta(\bx,t) )\rho)\geq \nabla\cdot(\bF_c(\bx)\rho)-|\nabla\cdot (\bff_\delta(\bx,t) \rho)| \geq 0\]
where the above follows from inequality (\ref{eq1}). Similarly, inequality (\ref{ss1}) follows from inequality (\ref{eq2}).
\end{proof}

\subsection{Uncertainty in System States}

In this section, we consider uncertainty in the state of the system. The uncertainty in the state estimate could be due to sensor noise or the error in the state estimate. For example, in robotic applications, uncertainty can arise due to uncertainty in the localization algorithm used to localize the robot. One could also consider uncertainty in the location of the unsafe set. However, this can typically be handled by enlarging the unsafe set to include the uncertainty bounds. We will assume that the uncertainty in the state estimate is bounded. Let $\bx_0(t)$ be the nominal or estimated state of the system at time $t$, and we assume that the true state, $\bx(t)$ of the system can be anywhere within $\beta$ neighborhood of $\bx_0(t)$, i.e., 
\begin{align}
\|\bx(t)-\bx_0(t)\|\leq \beta
\end{align}
Let ${\cal N}_\beta(\bx_0)$ denotes the $\beta$ neighborhood of $\bx_0$.
The safe navigation problem with the uncertainty in the system state can then be written as 
\begin{subequations}\label{infinite_robust}
    \begin{align}
\min_{\bu}\;\;&\|\bu\|^2\\
\max_{\bx\in {\cal N}_\beta (\bx_0)}\;\;&-\nabla\cdot ((\bff+\bg\bu)\rho)\leq 0,\;\;a.e.\;\bx\in \bar \bX\\
&-\nabla\cdot ((\bff+\bg\bu)\rho)+\lambda\leq 0,\;\;\bx\in \bX_0
\end{align}
\end{subequations}

The above problem is an infinite-dimensional robust optimization problem. The goal is to minimize the control effort for each $\bx$, but the robustness involves maximizing the constraints for all $\bx\in {\cal N}_\beta(\bx_0)$. 
Unlike non-robust safety problems, which can be written as finite-dimensional QP, if we solve for $\bu$ along the trajectory, the above problem is
infinite-dimensional and nonconvex even if we solve for $\bu$ along a trajectory. This is due to the robustness $\max$ constraints. The above problem can be converted to a finite-dimensional convex optimization problem by sampling over the set ${\cal N}_\beta(\bx_0)$. Let $\{\bx_p\}_{p=1}^N$ be the data points sampled randomly from the set ${\cal N}_\beta(\bx_0)$, we can then approximate the infinite-dimensional constraints with finitely many constraints as 
\begin{subequations}\label{finite_robust}
  \begin{align}
&\min_{\bu}\;\;\|\bu\|^2\\
&{\rm s.t.}\;\;\nabla\cdot((\bff(\bx_p)+\bg(\bx_p) \bu)\rho(\bx_p))\geq 0,\;\;p=0,1,\ldots, N \label{finite_robust_1}
\end{align}  
\end{subequations}

Since the data points $\bx_p$ are drawn randomly, the solution to the finite-dimensional optimization problem (\ref{finite_robust}) will be random in nature. In the following discussion, we will assume that the sampling probability is uniform w.r.t. Lebesgue measure. 
Now we can use results from scenario optimization from \cite{calafiore2006scenario} to connect the results obtained using the finite-dimensional optimization problem (\ref{finite_robust}) to the original infinite dimensional nonconvex robust optimization problem (\ref{infinite_robust}). Toward this goal, we define 
\[{\cal F}(\bx,\bu):=-\nabla\cdot((\bff +\bg \bu)\rho)\]
and the probability of violation as
\[G(\bu):={\rm Prob}\{\bx\in {\cal N}_{\beta}(\bx_0): {\cal F}(\bx,\bu)>0\}\]
Hence, $G(\bu)$ measures the volume of bad parameter $\bx$ for which the constraints ${\cal F}(\bx,\bu)\leq 0$ are violated. One can define $\epsilon$-level feasible solution as any solution for which $G(\bu)\leq \epsilon$. 
\begin{definition}($\epsilon$-Level Solution): Let $\epsilon\in (0,1)$. We say that $\bu\in \bU$ is 
is an $\epsilon$-level robustly feasible (or, more simply, an $\epsilon$-level)
solution, if $G(\bu)\leq \epsilon$.
\end{definition}
Since the samples $\bx_p$ are drawn randomly, the optimal solution, $\hat \bu_N$ of the optimization problem (\ref{finite_robust}) will be $\epsilon$-level solution for the given random extraction and not the other. The $\sigma$ is used to bound the probability that $\hat \bu_N$ is not
a $\epsilon$-level solution for other random extraction. Hence, the parameter $\sigma$ measures the risk of failure, or confidence,
associated with the randomized solution algorithm. In particular, the following theorem follows from \cite{calafiore2006scenario} connecting the optimal solution of the optimization problem (\ref{infinite_robust}) to that of (\ref{finite_robust}). 
\begin{theorem} Assume that all possible realization of samples $\{\bx_p\}_{p=1}^N$ either lead to an infeasible solution for the optimization problem (\ref{finite_robust}) or, if feasible, admits a unique optimal solution. Fix two real numbers $\sigma\in (0,1)$ (confidence parameter) and $\epsilon$ (level parameter). If 
\[N\geq \Big{\lceil}\frac{2}{\epsilon}\ln \frac{1}{\sigma}+2m +\frac{2m}{\epsilon} \ln \frac{2}{\epsilon}\Big{\rceil}\]
where $m$ is the number of control inputs and $\lceil\cdot \rceil$ denotes the smallest integer greater than or equal to the argument. Then with probability no smaller than $(1-\beta)$, either (\ref{finite_robust}) is unfeasible or (\ref{finite_robust}) is feasible, and then its optimal solution, $\hat \bu_N$, is $\epsilon$-level robust feasible.  
\end{theorem}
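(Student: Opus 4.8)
The plan is to recognize that the finite-dimensional program (\ref{finite_robust}) is a \emph{scenario convex program} in the sense of \cite{calafiore2006scenario}, so that the claimed statement is a direct specialization of the scenario optimization theorem once the convexity and regularity hypotheses are verified. All of the genuinely probabilistic content---the Helly-type bound on the number of support constraints and the resulting $\epsilon$-level feasibility guarantee---already resides in the cited reference, so the task reduces to exhibiting the correct correspondence between our robust safety problem and the abstract setup there, rather than re-deriving the sample-complexity bound from scratch.

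First I would verify the convexity required to invoke \cite{calafiore2006scenario}. The decision variable is $\bu\in\mR^m$, so the dimension of the optimization variable equals $m$, the number of control inputs appearing in the bound. The objective $\|\bu\|^2$ is a convex quadratic. For each fixed sample $\bx_p$, the sampled constraint is $\cF(\bx_p,\bu)=-\nabla\cdot((\bff(\bx_p)+\bg(\bx_p)\bu)\rho(\bx_p))\le 0$; treating $\bu$ as a constant vector when forming the divergence and then evaluating at $\bx_p$, the control enters only through $\sum_j u_j\,\nabla\cdot(\bg_j(\bx_p)\rho(\bx_p))$, which is \emph{affine} in $\bu$. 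Hence each sampled constraint defines a half-space, the feasible set for any realization of $\{\bx_p\}_{p=1}^N$ is a convex polyhedron, and (\ref{finite_robust}) is a convex quadratic program with $m$ variables and $N$ random linear constraints.

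Next I would map the random-extraction structure onto the scenario framework: the uncertain parameter is the sampled state $\bx_p$ drawn from $\cN_\beta(\bx_0)$ under the uniform (Lebesgue) sampling measure, and the violation probability $G(\bu)=\mathrm{Prob}\{\bx\in\cN_\beta(\bx_0):\cF(\bx,\bu)>0\}$ is exactly the quantity controlled by the theorem. The standing hypothesis---that every realization of the samples either renders (\ref{finite_robust}) infeasible or admits a \emph{unique} minimizer---is the non-degeneracy condition under which the Calafiore--Campi bound applies without an auxiliary tie-breaking rule. With these identifications the assertion becomes verbatim the scenario theorem: for $N\ge\lceil\frac{2}{\epsilon}\ln\frac{1}{\sigma}+2m+\frac{2m}{\epsilon}\ln\frac{2}{\epsilon}\rceil$, with confidence at least $1-\sigma$ over the draw of the $N$ samples, either the program is infeasible or its unique optimizer $\hat\bu_N$ satisfies $G(\hat\bu_N)\le\epsilon$, i.e.\ it is $\epsilon$-level robustly feasible. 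The only genuine obstacle is a modeling point rather than a mathematical one: confirming that the sampled constraint is truly affine in $\bu$ \emph{pointwise}, which relies on imposing the inequality at fixed sample locations $\bx_p$ so that no spatial derivative of $\bu$ itself appears. Once this is granted, convexity is immediate and nothing beyond the direct invocation of \cite{calafiore2006scenario} is required.
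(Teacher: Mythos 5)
Your proposal matches the paper's treatment: the paper offers no independent proof of this theorem and simply states that it follows from the scenario optimization result of Calafiore and Campi, which is exactly the reduction you carry out (verifying that the sampled constraints are affine in $\bu$, the objective is convex quadratic, and the uniqueness hypothesis supplies the required non-degeneracy). Your explicit check that no spatial derivative of $\bu$ survives at fixed sample points, and your reading of the confidence level as $1-\sigma$ (the paper's ``$1-\beta$'' appears to be a typo), are both consistent with the intended argument.
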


\section{Algorithm for QP-CDF}\label{section_QPfinite}
This section presents the algorithm for solving the QP presented in  Eq. (\ref{eq:qp-cdf1}). In particular, the main part of the algorithm is the approximation of the spatial derivative of the control input $\bu$.
The constraints in the QP can be rewritten for the multi-input case as 
\begin{align}    \nabla\cdot(\bff(\bx)\rho)+\sum_{j=1}^m \nabla\cdot (\bg_j(\bx)\rho)u_j + \sum_{j=1}^m \nabla u_j \bg_j \rho \geq 0 \label{eq:expansion}
\end{align}
where $\bg=(\bg_1,\ldots, \bg_m)$ and $\nabla u_j$ is a row vector. 
We split the inequality in \eqref{eq:expansion} as follows:
\begin{subequations}
    \begin{align}
    \nabla \cdot (\bff\rho) + \sum_{j=1}^m \nabla\cdot (\bg_j\rho)u_j &\geq \zeta \label{eq:inequality_split1}\\
\sum_{j=1}^m \nabla u_j \bg_j \rho &\geq -\zeta   \label{eq:inequality_split}    
\end{align}
\end{subequations}

for some $\zeta>0$. 
The following explains the procedure for approximating the spatial gradient involving the control input in the optimization problem (\ref{eq:qp-cdf1}).  The left-hand side of the (\ref{eq:inequality_split}) is equal to the Lie derivative of the $u_j$ along the direction of the vector field $\rho\bg_j$ denoted by $\cL_{\rho \bg_j} u$. Let $\phi_{\rho\bg_j}^{\Delta t}(\bx)$ be the flow of vector field $\rho \bg_j$ for time step $\Delta t$. For $\Delta t$ sufficiently small, we have
\begin{align}
\cL_{\rho\bg_j} u_j \approx \frac{u_j(\phi_{\rho\bg_j}^{\Delta t}(\bx))-u_j(\bx)}{\Delta t}
\end{align}
Furthermore, for sufficiently small $\Delta t$, we can approximate 
\[\phi_{\rho\bg_j}^{\Delta t}(\bx)\approx \bx+\Delta t \rho(\bx)\bg_j(\bx).\]
Letting $\bz_j=\bx+\Delta t\rho(\bx) \bg_j(\bx)$, then (\ref{eq:inequality_split}) can be approximated as 
\begin{align}
\sum_{j=1}^m u_j(\bz_j)-u_j(\bx)\geq -\zeta\Delta t 
\end{align}
 Note that while we are interested in computing the control value at point $\bx$, i.e., $\bu(\bx)$, due to the spatial gradient of $\bu$ in the optimization problem, we are required to know the value of $u_j$ at $\bz_j$. Hence, in the optimization problem, to solve for $\bu(\bx)$, we have to introduce an additional variable $u_j(\bz_j)=:\bar u_j$. Now, let $\bar \bu:=(\bar u_1,\ldots,\bar u_m)^\top$, $\bar \bu_0=(u_{10}(\bz_1),\ldots, u_{m0}(\bz_m))^\top$, and $\bu_0=(u_{10},\ldots, u_{m0})^\top$. Here, $ \bu_0$ and $\bar \bu_0$ are the value of nominal input at $\bx$ and $\{\bz_1,\dots,\bz_m\}$ respectively. Therefore, we can write the optimization problem as follows:
\begin{subequations}
    \begin{align}
&\min_{\bu,\bar \bu,\zeta} \|\bu(\bx)-\bu_0(\bx)\|+\|\bar \bu-\bar \bu_0\|+\zeta^2\\
&\nabla\cdot(\bff(\bx)\rho(\bx))+\sum_{j=1}^m \nabla\cdot(\bg_j(\bx) \rho(\bx)) u_i(\bx)\geq \zeta \\
&\nabla\cdot(\bff(\bz_j)\rho(\bz_j))+\sum_{i\neq 
 j}^m \nabla\cdot(\bg_i(\bz_j) \rho(\bz_j)) u_{i0}(\bz_j)+\nonumber\\&\nabla\cdot (\bg_j(\bz_j) \rho(\bz_j))\bar u_j\geq \zeta, \\
&\sum_{j=1}^m \bar u_j- u_j(\bx)\geq -\zeta\Delta t  
\end{align}
\end{subequations}

where $j=1,\ldots,m$ and $\bz_j=\bx+\Delta t \rho(\bx)\bg_j(\bx)$.
The above equation has $2m+1$ decision variables: $\bu(\bx), \bar \bu=(u_1(\bz_1),\ldots, u_m(\bz_m))$ and $\zeta$.   
\begin{algorithm}\label{algo_1}
\caption{QP-CDF}
\textbf{Input:} $\bff,\bg,\rho,\bx_0,\bu_0,\bar{\bu}_0,N$\\
\For{$k=1:N$}{
$\bz_j=\bx_{k-1}+\Delta t\rho(\bx_{k-1}) \bg_j(\bx_{k-1})\;\;\forall\;j=1,\dots,m$.\\
    \textbf{Solve} \For{$\bu_k,\bar{\bu}_k,\zeta$}{
    $\min \;\| \bu_k - \bu_{0k}\|^2+\|\bar{\bu}_k - \bar{\bu}_{0k}\|^2 + \zeta^2$\\
    s.t. \\
    $\nabla \cdot (\bff(\bx_{k-1}) \rho(\bx_{k-1}))+$\\
    \quad \quad \quad \quad $\sum_{j=1}^m \nabla\cdot(\bg_j(\bx_{k-1}) \rho(\bx_{k-1})) u_{ik}(\bx_{k-1})\geq \zeta$,\\
    \quad \\
    $\nabla \cdot (\bff(\bz_1) \rho(\bz_1))+$\\
    \quad \quad \quad \quad $\sum_{i\neq 1}^m \nabla\cdot(\bg_i(\bz_1) \rho(\bz_1)) u_{i0k}(\bz_1) +$,\\
    \quad \quad \quad \quad $\nabla\cdot (\bg_1(\bz_1) \rho(\bz_1))\bar u_{1k}\geq \zeta$\\
    \quad \\
    $\;\;\;\;\;\;\;\vdots$ \\
    \quad \\
    $\nabla \cdot (\bff(\bz_m) \rho(\bz_m))+$\\
    \quad \quad \quad \quad $\sum_{i\neq m}^m \nabla\cdot(\bg_i(\bz_m) \rho(\bz_m)) u_{i0k}(\bz_m) +$,\\
    \quad \quad \quad \quad $\nabla\cdot (\bg_m(\bz_m) \rho(\bz_m))\bar u_{mk}\geq \zeta$\\
    \quad \\
     $\sum_{j=1}^m \bar u_j- u_j(\bx)\geq -\zeta\Delta t$ 
    }
$\bx_k = \bx_{k-1} + \Delta t (\bff(\bx_{k-1}) + \bg(\bx_{k-1})\bu_k)$
}
\end{algorithm}

\section{Simulation 
Results}\label{section_simulation}

In this section, we provide some examples to showcase the power of CDF for safe navigation. We start with the single integrator and double-gyre flow-field dynamics to demonstrate the application of QP-CDF for obstacle avoidance. Then, we validate the robust applications of QP-CDF by utilizing it for safe navigation for bicycle models and lane-keeping examples. 
\subsection{Comparison of CDF with CBF}
Consider the single integrator dynamics as follows:
\begin{subequations}\label{eq:single_integrator}
   \begin{align}
    \dot{x}_1 &= u_1 \\
    \dot{x}_2 &= u_2 
\end{align} 
\end{subequations}

where $u_1$ and $u_2$ are velocity controls to position states $x_1$ and $x_2$ respectively. The safe control for \eqref{eq:single_integrator} is obtained by solving the QP-CDF given in \eqref{eq:qp-cdf1}. In Fig.~\ref{fig:cbf_cdf_comparison}, we provide a single obstacle example with multiple sensing regions. The center of the obstacle is located at $[0,0]$, and its radius is 1 unit. The goal is to move from the initial point given by $\bx_0 = [-5,0]$ to the target point at $\bx_T = [5,0]$ while avoiding the circular obstacle. The radius of the different sensing regions is given by $b_j = \{2,\;3,\;4\}$. In Fig.~\ref{fig:cbf_cdf_comparison}, we provide the trajectories obtained through safe navigation control by solving the QP-CDF and comparing it with QP-CBF for different degrees of safety marked by the sensing regions. The general formulation of QP-CBF to obtain the safe control for the control-affine dynamical system as given in \eqref{eq:drift_syst1} is given below:
\begin{subequations}\label{eq:qp-cbf}
    \begin{align}
    &\min_{\bu}\; \|\bu\|^2  \\
       \text{s.t.}\;\;\;& \dot{h}(\bx,\bu) \ge -e_1 h(\bx) \\
       & \dot{V}(\bx,\bu) \le -e_2\; V(\bx) 
\end{align}
\end{subequations}

for some $e_1,\;e_2 > 0$. Here, $\bu_0$ is the reference control, $h(\bx) = 1- \|\bx\|^2$ represents the barrier function incorporating the safety constraints, and $V(\bx) = \|\bx-\bx_T\|^2$ represents the Lyapunov function incorporating the convergence constraints in the QP-CBF. The parameters $e_1$ and $e_2$ are used to tune the degree of safety and the convergence rate for the QP-CBF, respectively. In contrast, QP-CDF only needs one constraint involving density function $\rho(\bx)$ to enforce both safety and convergence. We use $e_1 = \{0.3,\;0.5,\;0.7\}$ and the same value of $e_2=0.5$ for all trajectories in the given example. Due to the exponential nature of the CBF constraint, the degree of safety drops drastically between $e_1 = 0.3$ and $e_1 = 0.5$ (see dashed blue vs dashed purple line). However, for the QP-CDF, the variation in the degree of safety changes more uniformly based on the sensing region $b_j$. Fig.~\ref{fig:cbf_cdf_control_plots} compares the control plots obtained from QP-CDF and QP-CBF for the different degrees of safety. Therefore, for complex environments with multiple obstacle sets, a desired level of safety can be achieved easily using QP-CDF.
\begin{figure}[ht]
  \centering
  \includegraphics[width=0.66\linewidth]{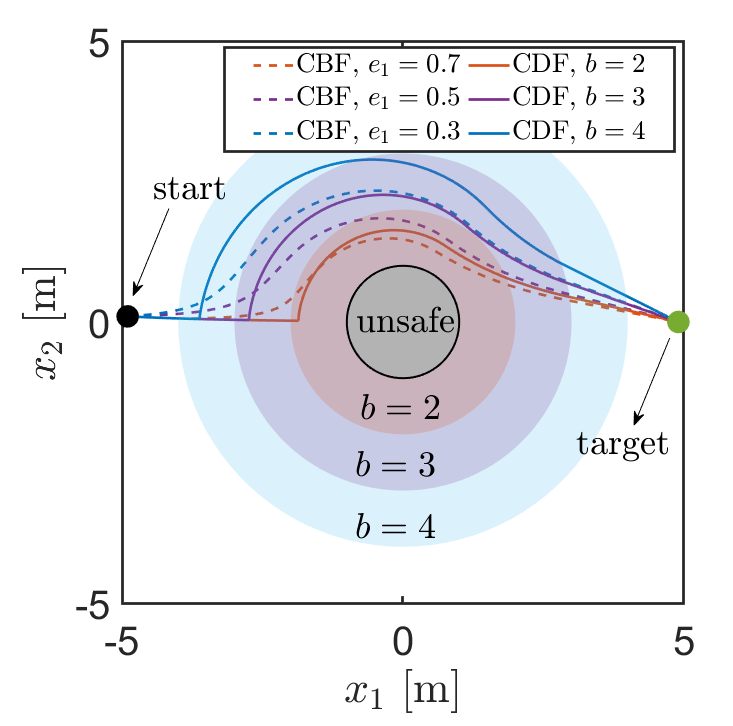}
\caption{\textbf{Comparison with CBFs:} Performance of QP-CBF (dashed lines) with varying $e_1$ compared with QP-CDF (solid lines) with varying $b_j$ for single integrator dynamics.}
\label{fig:cbf_cdf_comparison}
\end{figure}

\begin{figure}[ht]
  \centering
  \includegraphics[width=1\linewidth]{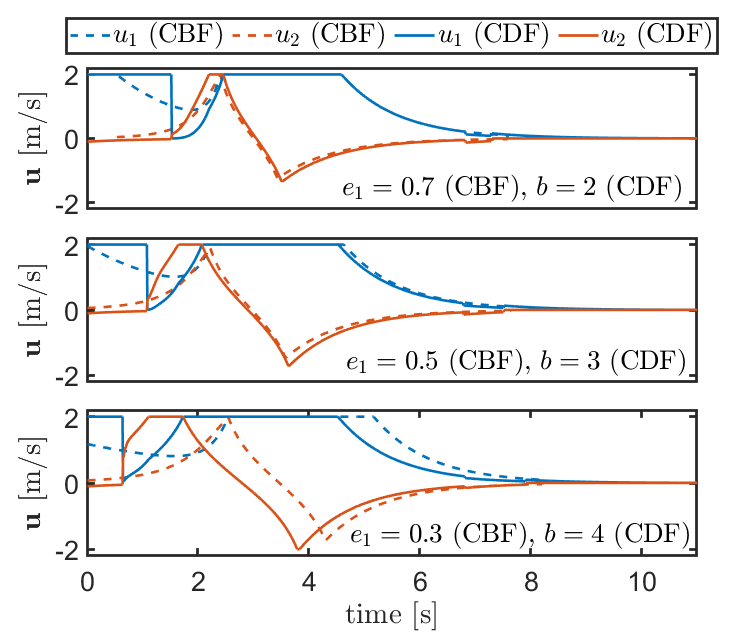}
\caption{\textbf{Comparison with CBFs:} Control plots for QP-CBF (dashed lines) with varying $e_1$ compared with QP-CDF (solid lines) with varying $b_j$ for single integrator dynamics.}
\label{fig:cbf_cdf_control_plots}
\end{figure}
\subsection{Double-Gyre Flow-field}
Our second example is for navigating a robot under the influence of a double-gyre flow field. The double-gyre vector field is used as a model for oceanographic flows, and hence, this example can be considered as a model for underwater robots navigating in oceanographic flow \cite{knizhnik2022flow}.  
\begin{subequations}\label{double_gyre}
    \begin{align}
    \dot{x}_1 &= -\pi \sin{\pi x_1}\cos{\pi x_2} + u_1 \\
    \dot{x}_2 &= \pi \sin{\pi x_2}\cos{\pi x_1} + u_2
\end{align}
\end{subequations}
$u_1$ and $u_2$ are control inputs to position states $x_1$ and $x_2$. The safe control for \eqref{double_gyre} is obtained by solving the QP-CDF given in \eqref{eq:qp-cdf1}. The center of the obstacle is located at $[1,0]$, and its radius is 0.25 units. The goal is to move from the initial point given by $\bx_0 = [1.5,0.5]$ to the target point at $\bx_T = [0.5,0.5]$ while avoiding the circular obstacle. Fig.~\ref{fig:double_gyre_state_TAC} showcases the safe trajectory of the double-gyre model in the position states.
\begin{figure}[ht]
  \centering
  \includegraphics[width=1\linewidth]{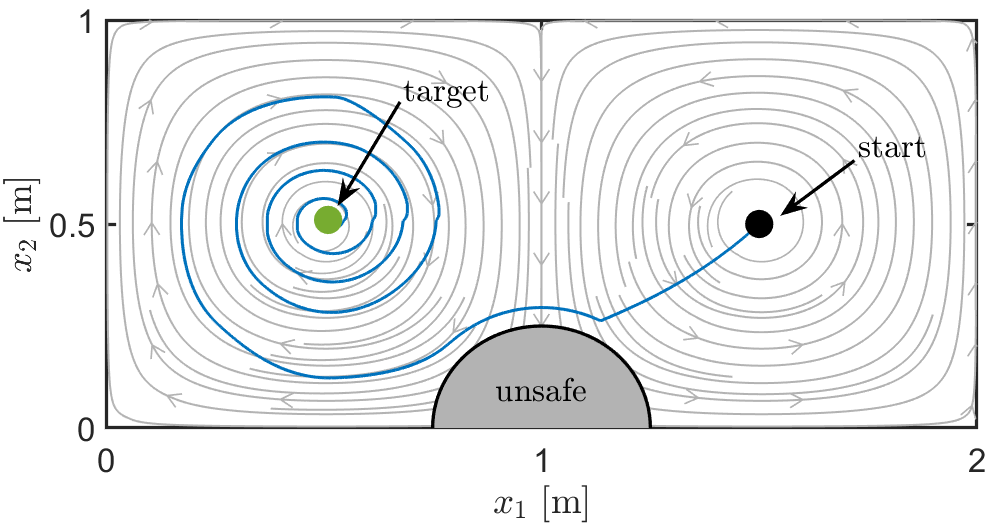}
\caption{\textbf{Double-Gyre flow-field:} Safe navigation trajectory in the position states.}
\label{fig:double_gyre_state_TAC}
\end{figure}
\subsection{Bicycle Model}
Consider the bicycle model as follows:
\begin{subequations}
\begin{align}
    \dot{x}_1 &= v\;\cos\left( \theta + \Phi \right)  \label{eq:bm_x1}\\
    \dot{x}_2 &= v\;\sin\left( \theta + \Phi \right)  \label{eq:bm_x2}\\
    \dot{\theta} &= \frac{v}{L}\;\cos \Phi \tan \Theta  \\
    \dot{\Theta} &= \omega  \\
    \dot{v} &= a \label{eq:bicycle}
\end{align}
\end{subequations}
where 
$$\Phi = \tan^{-1}\left( \frac{l_r\; \tan \Theta}{L}\right)$$
where $x_1,x_2$ represents the position states, $\theta, \Theta, v$ represents the heading angle, steering angle, and linear velocity, respectively. The parameter $l_r$ represents the distance between the rear wheel and the center of mass, whereas the parameter $L$ represents the total length. The control inputs to the bicycle model include steering rate $\omega$, and linear acceleration $a$. Here, we assume the obstacles to avoid are only present in the position states. 
\subsubsection{Uncertainty in the initial conditions}
In this subsection, we consider bounded uncertainty for the position state estimates of the dynamics in the $x_1 - x_2$ space. To design a control for obstacle avoidance in the position states, we first reformulate the \eqref{eq:bm_x1} and \eqref{eq:bm_x2} as follows:
\begin{subequations}
    \begin{align}
        \dot{x}_1 &= u_1 \label{eq:bm_u1} \\
        \dot{x}_2 &= u_2 \label{eq:bm_u2}
    \end{align}
\end{subequations}
We next solve for $u_1$ and $u_2$ using the QP-CDF given in \eqref{finite_robust} to account for the uncertainty in the position state estimate. Here, $p=20$ represents the number of random data samples in the $x_1-x_2$ space selected from the set ${\cal N}_\beta(\bx_0)$ to solve \eqref{finite_robust_1}. The state estimation uncertainty is bounded by $\beta = 0.5$. Here, $c_k(\bx):= \|\bx - o_k\|^2 - r_{1k}^2$ and $b_k(\bx):= \|\bx - o_k\|^2 - r_{2k}^2$ where $o_k$, $r_{1k}$ and $r_{2k}$ are the center, radius and sensing radius of the $k^{th}$ circular obstacle. The control values $u_1$ and $u_2$ so obtained as the solution of the QP-CDF can be used to calculate $\Tilde{v}$ and $\Tilde{\theta} + \Tilde{\Phi}$ as follows:
\begin{align}
    \Tilde{v} = \sqrt{u_1^2 + u_2^2},\;\;\;
    \Tilde{\theta}+ \Tilde{\Phi} = \tan^{-1} \left(\frac{u_2}{u_1} \right)\label{eq:bm_theta_tilde}
\end{align} 
The control $v$ needs to be designed so that $v - \Tilde{v}$ tends to zero.
Therefore, we consider the Lyapunov function given by $0.5(v-\Tilde{v})^2$ which gives us the following control law for $a$:
\begin{align}
    a = \dot{\Tilde{v}} - \sigma_1(v-\Tilde{v}) \label{eq:bi_v}
\end{align}
for some $\sigma_1>0$.

Similarly, the control $\omega$ needs to be designed such that $(\theta + \Phi) - (\Tilde{\theta}+\Tilde{\Phi})$ tends to zero. Therefore, we consider the Lyapunov function given by $1-\cos(\theta + \Phi - \Tilde{\theta}-\Tilde{\Phi})$ which gives us the following control law for $\omega$:
\begin{dmath}
    \omega = \frac{1+\left(\frac{l_r\tan \Theta }{L}\right)^2}{\frac{l_r \sec^2\Theta}{L}}\left[ -\frac{v}{L}\cos\Phi \tan\Theta + \Dot{\Tilde{\theta}}+\Dot{\Tilde{\Phi}}\\-\sigma_2\sin(\theta + \Phi - \Tilde{\theta}-\Tilde{\Phi})\right] \label{eq:bi_omega}
\end{dmath}
for some $\sigma_2>0$.
Fig.~\ref{fig:bicycle_model_position_states} shows the safe navigation trajectory for the bicycle model with some uncertainty in the position state estimates. The radius of both the initial and the target set is 0.5 units, which represents the uncertainty region in the initial position estimates. The obstacle radius is 1.3 units, and the sensing radius is 1.8 units, with the center for the two obstacles located at $[-2,0.5]$ and $[1.7,-0.8]$, respectively. The value of gain $\sigma_1$ and $\sigma_2$ is chosen to be 2 and 30, respectively. The trajectory in Fig.~\ref{fig:bicycle_model_position_states}(a) represents zero uncertainty in the position state estimate. The trajectory in Fig.~\ref{fig:bicycle_model_position_states}(b) illustrates the uncertainty in the position state estimate and the corresponding evolution of the uncertainty state towards the target set. Fig.~\ref{fig:bicycle_model_other_states} represents the state trajectory plot for $\theta$, $\Theta$, and $v$ for both scenarios, i.e., with and without uncertainty in the position state estimates. As observed in Fig.~\ref{fig:bicycle_model_position_states}, The trajectory tries to pass through the gap between the two obstacles to reach the target when there is no uncertainty in the position state estimates. However, in the presence of uncertainty in the estimates, it chooses a different trajectory to account for the safe navigation of all the estimated position states within the uncertainty bound.
\begin{figure}[ht]
  \centering
  \includegraphics[width=1\linewidth]{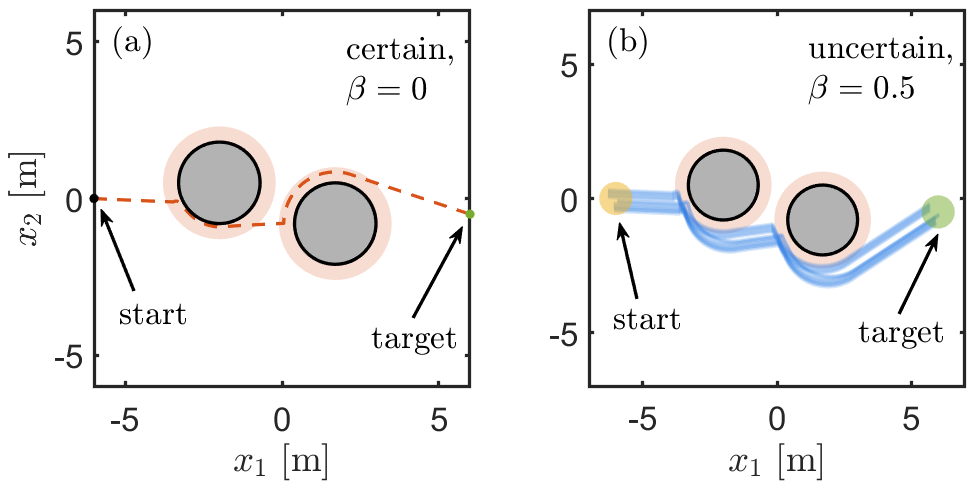}
\caption{\textbf{Bicycle model:} (a) Trajectory plots with zero uncertainty in the state estimate and (b) trajectory plots with uncertainty in the state estimate where uncertainty bound $\beta=0.5$.}
\label{fig:bicycle_model_position_states}
\end{figure}

\begin{figure}[ht]
  \centering
  \includegraphics[width=1\linewidth]{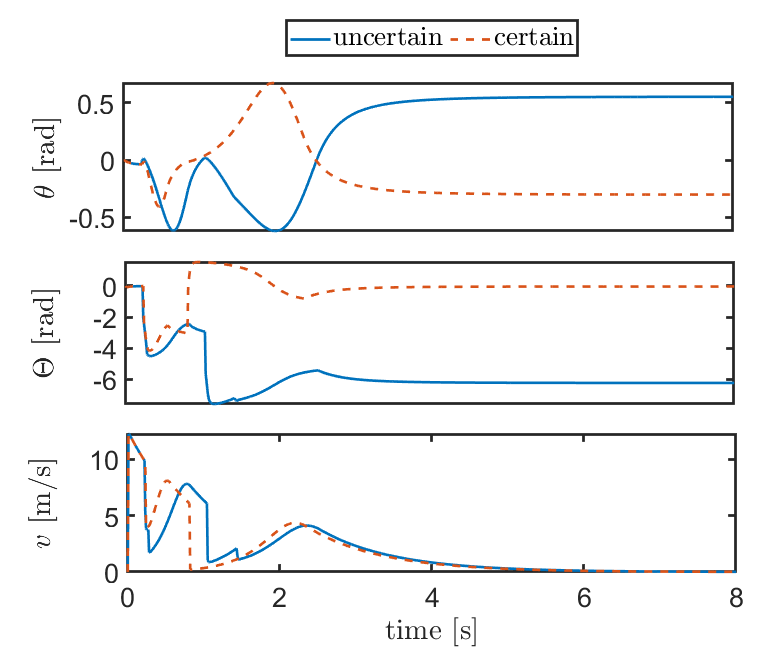}
\caption{\textbf{Bicycle model:} State trajectory plots for $\theta$, $\Theta$, and $v$ for both scenarios i.e., with and without uncertainty in the position state estimates.}
\label{fig:bicycle_model_other_states}
\end{figure}
\subsubsection{Uncertainty in the dynamics}
In this subsection, we consider the bicycle model with uncertainty in the dynamics, as given below.
\begin{subequations}
\begin{align}
    \dot{x}_1 &= v\;\cos\left( \theta + \Phi \right) + \delta_{x_1}  \label{eq:bm_x1_u}\\
    \dot{x}_2 &= v\;\sin\left( \theta + \Phi \right)  + \delta_{x_2} \label{eq:bm_x2_u}\\
    \dot{\theta} &= \frac{v}{L}\;\cos \Phi \tan \Theta + \delta_{\theta}  \\
    \dot{\Theta} &= \omega + \delta_{\Theta}  \\
    \dot{v} &= a + \delta_{v}\label{eq:bicycle_uncertain}
\end{align}
\end{subequations}
where all the uncertainty terms are bounded. The uncertainty could come from either sensor noise or some external disturbance input. To find the safe navigation control law, we substitute $u_1 = v\cos(\theta + \Phi)$ and $u_2 = v\sin(\theta + \Phi)$ to reformulate \eqref{eq:bm_x1_u}-\eqref{eq:bm_x2_u} as given below.
\begin{subequations}\label{eq:bicycle_uncert_dyn_si}
    \begin{align}
        \dot{x}_1 &= u_1 + \delta_{x_1}\label{eq:bicycle_uncert_dyn1_si}\\
        \dot{x}_2 &=u_2 + \delta_{x_2}\label{eq:bicycle_uncert_dyn2_bm}
    \end{align}
\end{subequations}
The safe navigation control law with respect to \eqref{eq:bicycle_uncert_dyn_si} is obtained by solving QP-CDF given in \eqref{eq:robust_qp_cdf} where the value of $\gamma$ is dependent on the bounds of $\delta_{x_1}$ and $\delta_{x_2}$. Next, we calculate $\Tilde{v}$ and $\Tilde{\theta} + \Tilde{\Phi}$ using $u_1$ and $u_2$ as given in \eqref{eq:bm_theta_tilde}. Now, the control $a$ needs to be designed such that $v - \Tilde{v}$ tends to zero.
Therefore, assuming $|\delta_{v}| < \xi_1 $, we consider the Lyapunov function given by $0.5(v-\Tilde{v})^2$ which gives us the following control law for $a$:
\begin{align}
    a = \dot{\Tilde{v}} - \sigma_1(v-\Tilde{v}) -  \xi_1\; \text{sgn}(v-\Tilde{v}) \label{eq:bm_a}
\end{align}
for some $\sigma_1>0$.
Similarly, the control $\omega$ needs to be designed such that $(\theta + \Phi) - (\Tilde{\theta}+\Tilde{\Phi})$ tends to zero.
Therefore, assuming $\max\{|\delta_{\theta}|,|\delta_{\Theta}|\} < \xi_2 $, we consider the Lyapunov function given by $1-\cos(\theta + \Phi - \Tilde{\theta}-\Tilde{\Theta})$ which gives us the following control law for $\omega$:
\begin{dmath}
    \omega = \frac{1+\left(\frac{l_r\tan \Theta }{L}\right)^2}{\frac{l_r \sec^2\Theta}{L}}\left[ -\frac{v}{L}\cos\Phi \tan\Theta + \Dot{\Tilde{\theta}}+\Dot{\Tilde{\Phi}}\\-\sigma_2\sin(\theta + \Phi - \Tilde{\theta}-\Tilde{\Phi}) - \xi_2\; \text{sgn}\left(\sin(\theta + \Phi - \Tilde{\theta}-\Tilde{\Phi})\right)\right] \label{eq:bm_omega}
\end{dmath}
for some $\sigma_2>0$.
Fig.~\ref{fig:bicycle_uncertain_dyn} shows the bicycle model's safe navigation trajectory with some dynamics uncertainty. The obstacle radius is 1.3 units, and the sensing radius is 1.8 units, with the center for the two obstacles located at $[-2,0.5]$ and $[1.5,-1]$, respectively. The value of gain $\sigma_1$ and $\sigma_2$ are chosen to be 2 and 30, respectively. In this example, $|\delta_{x_1}|\le 0.1$ m, $|\delta_{x_1}|\le 0.1$ m, $|\delta_{\theta}|\le 0.5$ radians, $|\delta_{\Theta}|\le 0.5$ radians, and $|\delta_{v}|\le 0.1$ m/s. The trajectory in orange represents zero uncertainty in the dynamics. The blue trajectory represents the bounded uncertainty in all the states of the bicycle model. Fig.~\ref{fig:bicycle_uncertain_dyn_control} displays the state trajectory plot for $\theta$, $\Theta$, and $v$ for both scenarios, i.e., with and without uncertainty in the dynamics.  

\begin{figure}[ht]
  \centering
  \includegraphics[width=.66\linewidth]{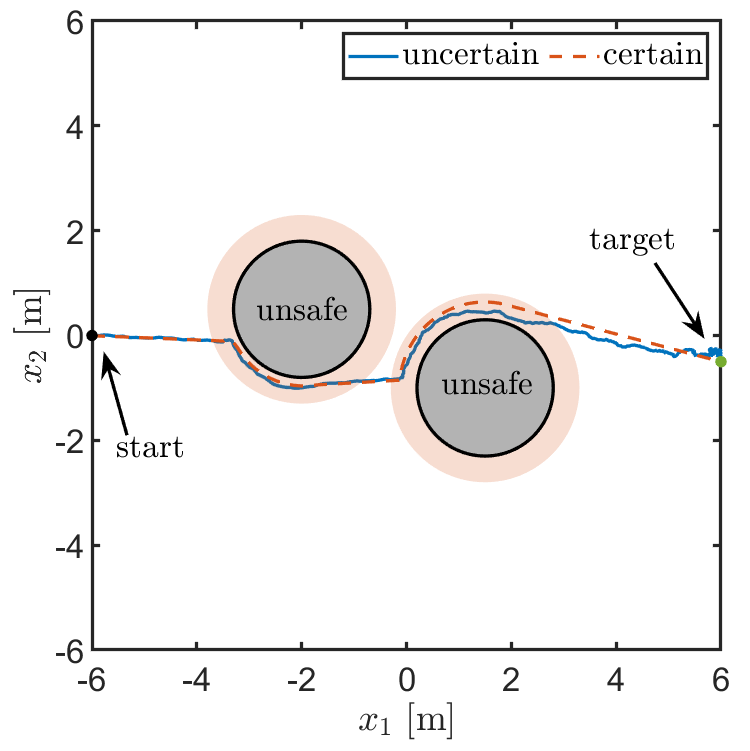}
\caption{\textbf{Bicycle model:} Safe navigation trajectory in the position states in the presence of bounded uncertainty in the dynamics.}
\label{fig:bicycle_uncertain_dyn}
\end{figure}

\begin{figure}[ht]
  \centering
  \includegraphics[width=1\linewidth]{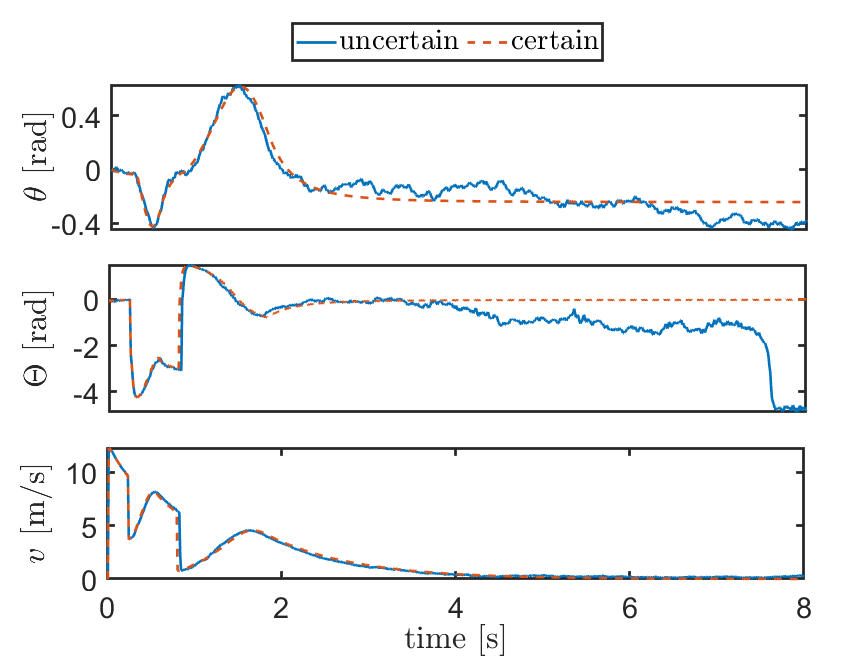}
\caption{\textbf{Bicycle model:} State trajectory plots for $\theta$, $\Theta$, and $v$ for both scenarios i.e., with and without uncertainty in the dynamics.}
\label{fig:bicycle_uncertain_dyn_control}
\end{figure}
\subsection{Autonomous Lane-keeping System}
We consider the lane-keeping (LK) system dynamics with the look-ahead system as shown in Fig.~\ref{fig:lk_model}. We assume that the vehicle has constant longitudinal speed $v_0 = 24$  m/s as we are only interested in the lateral motion of the vehicle. For the LK model, the look-ahead distance makes the vehicle more stable \cite{tan2002automatic}. The dynamics of the LK model have states in terms of lateral position and yaw angle errors with respect to road \cite{rajamani2011vehicle,ames2016control}. Here, the state is $\bx:= [x_1,x_2,x_3,x_4]^\top$ where $x_1$ represents the lateral offset from the lane center at the look-ahead point, $x_2$ is the rate of change of lateral offset from the lane center at the center of gravity (c.g.) of the vehicle, $x_3$ is the heading angle error at the c.g. of the vehicle, and $x_4$ is the yaw rate. We assume $x_3$ will take very small values. The input $u$ to the LK model is the steering angle of the front tires, and $r_d$ represents the desired yaw rate, which is obtained from the curvature of the road by $r_d =  \ \frac {v_0}{R}$ where $R$ is the curvature of the road. The vehicle's mass and moment of inertia are represented by $M$ and $I_z$, respectively. The look-ahead distance is given by $L$, the distance from the c.g. of the vehicle to the front and back tires are denoted by $a$ and $b$, respectively, and $C_r$ and $C_f$ represent tire (stiffness) parameters. The four-state dynamic model for the LK problem is given in \eqref{eq:LK_dynamics}.
\begin{figure}[ht]
  \centering
  \includegraphics[width=0.9\linewidth]{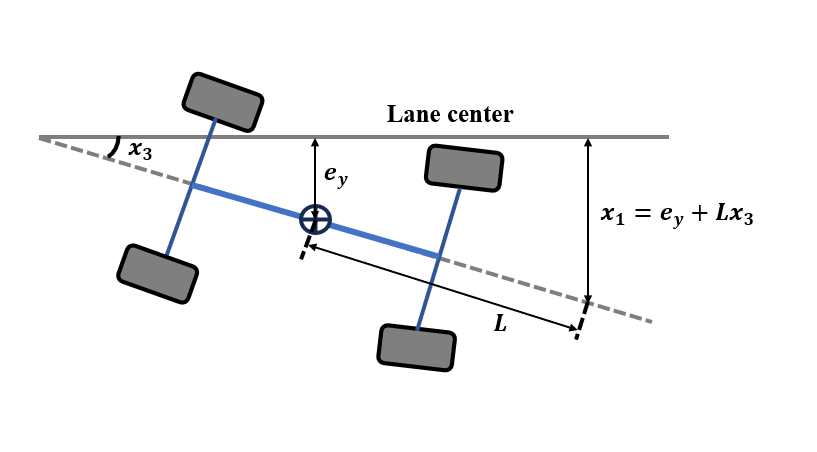}
\caption{Lane keeping system model with look-ahead distance method. }
\label{fig:lk_model}
\end{figure}
\begin{align}
    \Dot{\bx} = \bA\bx + \bB u + \bC r_d \label{eq:LK_dynamics}    
\end{align}
where
\begin{align*}
    \bA &= \begin{bmatrix}
        0&1&0&-L\\
        0&-\frac{2(C_f+C_r)}{Mv_0}&\frac{2(C_f+C_r)}{M}&\frac{2(bC_r - aC_f)}{Mv_0} - 2v_0\\
        0&0&0&-1\\
        0&\frac{2(bC_r - aC_f)}{I_z v_0}&-\frac{2(bC_r - aC_f)}{I_z}&-\frac{2(a^2C_f + b^2C_r)}{I_z v_0}
    \end{bmatrix},\\
    \bB &=\begin{bmatrix}
        0\\ \frac{2C_f}{M}\\ 0 \\ \frac{2aC_f}{I_z}
    \end{bmatrix}, \;\;\; \bC = \begin{bmatrix}
        L \\ v_0 \\ 1 \\ 0
    \end{bmatrix}
\end{align*}
The main goal of the LK problem is to provide an appropriate steering input to keep the vehicle "centered" in the lane. Additionally, the vehicle should satisfy the following hard and acceleration constraints.

\noindent
\textit{Hard Constraint:} The lateral displacement from the center of the lane should be less than the given constant $r_1$:
\begin{align}
    |x_1|\le r_1 \label{eq:LK_hard_const}
\end{align}
\textit{Acceleration Constraint:} The lateral acceleration of the vehicle should be less than a given constant $a_{max}$ due to vehicle limitations.
\begin{align}
    |\Dot{x}_2|\le a_{\textrm{max}} \label{eq:LK_acc_const}
\end{align}
\textit{Encoding LK constraints:} The hard and acceleration constraints are implemented as follows:

\textit{Encoding acceleration constraint:} We know that 
\begin{dmath}
    M\Dot{x}_2= 2C_f\left(u +x_3 - \frac{x_2 + ax_4}{v_0}\right)
    +2C_r\left(x_3 -\frac{x_2 - bx_4}{v_0}\right)\\ - 2Mv_0x_4 + Mv_0r_d\label{eq:LK_acc_const1}
\end{dmath}
Therefore, the acceleration constraint is equivalent to
\begin{align}
    u \in U := \left[ \frac{1}{2C_f}\left( -Ma_{\textrm{max}}+F_0\right),\frac{1}{2C_f}\left(Ma_{\textrm{max}}+F_0\right)\right] \label{eq:LK_acc_u_bounds}
\end{align}
where $F_0 = 2C_f\left(\frac{x_2 + ax_4}{v_0}-x_3\right) + 2C_r\left(\frac{x_2 - bx_4}{v_0}-x_3\right) + 2Mv_0x_4 -Mv_0r_d$.

\textit{Encoding Hard constraint:}
Consider the following:
\begin{align}
    c(\bx) := r_1 - x_1 - 0.5\;\text{sgn}(x_2)\frac{x_2^2}{a_{\textrm{max}}} \label{eq:LK_hx}\\ 
    b(\bx) := r_2 - x_1 - 0.5\;\text{sgn}(x_2)\frac{x_2^2}{a_{\textrm{max}}} \label{eq:LK_sx}
\end{align}
Now, the density function $\rho(\bx)$ for the safe control of the LK problem can be constructed utilizing equations \eqref{eq:bump}-\eqref{eq:rho_S}. Here, $\bx_T = [0,0,0,r_d]^\top$. Also, $r_1 = 0.9, \;r_2 = 0.7$.
\begin{table}[ht]
\centering
\caption{Paramter values used in Lane keeping}
\begin{tabular} {cccc}
\hline
\hline
\\
$M$ & 1589 kg & $r_2$ & 0.7 $\textrm{m}$\\
$a$ & 1.57 m & $C_f$ & 90000 N/rad\\
$b$ & 1.05 m & $C_r$ & 60000 N/rad\\
$L$ & 20 m & $I_z$ & 1765 $\textrm{m}^2$ $\cdot$ kg\\
$r_1$ & 0.9 m & $a_{\textrm{max}}$ & $0.3 \times 9.8 \;\textrm{m}/\textrm{s}^2$\\ 
\\
\hline
\hline
\end{tabular}
\label{table:lane_keeping}
\end{table}

Next, we consider an LK model with some dynamic uncertainty. The uncertainty appears in the dynamics in the form of external disturbance input. We consider two external disturbances (lateral force and moment). Lateral force disturbance $d_F$ represents wind gusts and bank angle. Moment disturbance $d_M$ usually arises because of hydroplaning and wind gust situations. The LK model dynamics in the presence of external disturbances is given by \eqref{eq:LK_dynamics_disturb}.     

\begin{align}
    \Dot{\bx} = \bA\bx + \bB u + \bC r_d + \bD_1 d_F + \bD_2 d_M \label{eq:LK_dynamics_disturb}  
\end{align}
where,
\begin{align*}
    \bD_1 = \begin{bmatrix}
        0\\ \frac{1}{M}\\ 0\\ 0
    \end{bmatrix},\;\;
    \bD_2 = \begin{bmatrix}
        0\\ 0\\ 0\\ \frac{1}{I_z}
    \end{bmatrix}
\end{align*}
We consider the scenario where the LK model experiences both these external disturbances. The bounds for $d_{F}$ and $d_M$ equal $\pm 800$ N and $\pm 400$ Nm, respectively. Fig. \ref{fig:lane_keeping_robust} shows the simulation plots for robust LK example. Fig.~\ref{fig:lane_keeping_robust}(a) shows lateral position plots for ten scenarios, each experiencing a different constant external disturbance in the form of lateral force and moment. Fig.~\ref{fig:lane_keeping_robust}(b) shows the corresponding lateral acceleration plots in units of $g=9.8 \;m/s^2$. The magnitude of the external disturbance for each scenario lies within the bounds. In Fig.~\ref{fig:lane_keeping_robust}(a), we observe that the lateral position plots converge to zero in each scenario to maintain their position in the middle of the lane. Here, $r_1$ and $r_2$ represent the lane edges and sensing region boundaries, respectively. Similarly, in Fig.~\ref{fig:lane_keeping_robust}(b), the acceleration plots lie well within the limits of the vehicle dynamics.

\begin{figure}[ht]
  \centering
  \includegraphics[width=1\linewidth]{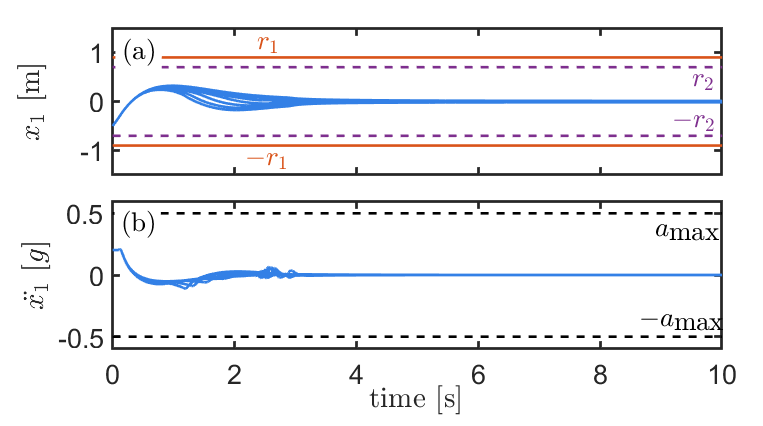}
\caption{\textbf{Lane-keeping:} (a) shows lateral position plots for 10 scenarios each experiencing a different constant external disturbance in the form of lateral force and moment. (b) shows the corresponding lateral acceleration plots in units of $g=9.8 \;m/s^2$.}
\label{fig:lane_keeping_robust}
\end{figure}

\section{Conclusion}\label{section_conclusion}
A novel approach for safe control design based on density function is presented in this paper. We introduce the control density function (CDF) to synthesize a safe controller for nonlinear systems. The occupancy-based physical interpretation of the density function allows us to provide an analytical construction of the CDF, which is used as a constraint in the quadratic program (QP) for the safe control design. We also consider uncertainty in the system dynamics and initial conditions for robust safe navigation. Finally, we have validated our framework by providing simulation results for the bicycle model, double-gyre fluid flow field, and lane-keeping example.

\section*{Appendix}

The proof of Theorem \ref{thm:11} relies on the following Lemma which was first presented in \cite{moyalan2022navigation}.
\begin{lemma}\label{lemma_1}
    If
    \begin{align}
        \int_0^{\infty}\int_{\bX_0}\mathds{1}_{\bX_u}(s_t(\bx)) d\bx \;dt=0 \label{eq:lemma_1}
    \end{align}
    then
    \begin{align}
        \int_{\bX_0}\mathds{1}_{\bX_u}(s_t(\bx)) d\bx = 0 \;\;\;\forall t\ge0 \label{eq:lemma_2}
    \end{align}
    i.e., the amount of time system trajectories spend in set $\bX_u$ starting from the positive measure set $\bX_0$ is equal to zero.
\end{lemma}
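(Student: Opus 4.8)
The plan is to reduce the pointwise-in-time claim \eqref{eq:lemma_2} to an almost-everywhere statement via Tonelli's theorem, and then upgrade from ``almost every $t$'' to ``every $t$'' using continuity of the flow together with the openness of $\bX_u$. Write $s_t$ for the (assumed forward-complete) flow map of the closed-loop system and set $g(t) := \int_{\bX_0}\mathds{1}_{\bX_u}(s_t(\bx))\,d\bx = m(B_t)$, where $B_t := \{\bx\in\bX_0 : s_t(\bx)\in\bX_u\}$. Since $\bff,\bg$ (and the controller) are $\cC^1$, the map $(t,\bx)\mapsto s_t(\bx)$ is jointly continuous, so the integrand is measurable and $g$ is a nonnegative measurable function. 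Hypothesis \eqref{eq:lemma_1} reads $\int_0^\infty g(t)\,dt = 0$, and nonnegativity immediately forces $g(t)=0$ for Lebesgue-almost every $t\ge 0$. Equivalently, applying Tonelli to swap the order of integration, the dwell time $\tau(\bx):=\int_0^\infty \mathds{1}_{\bX_u}(s_t(\bx))\,dt$ vanishes for almost every $\bx\in\bX_0$.

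First I would isolate the exceptional null set $N:=\{\bx\in\bX_0 : \tau(\bx)>0\}$, so that $m(N)=0$. The crux is to show that for \emph{every} fixed $t\ge0$ one has $B_t\subseteq N$, which yields $m(B_t)\le m(N)=0$ and hence \eqref{eq:lemma_2}. To prove the inclusion, fix $t$ and take $\bx\in B_t$, so $s_t(\bx)\in\bX_u$. Since $\bX_u$ is open and $\tau\mapsto s_\tau(\bx)$ is continuous, $s_\tau(\bx)\in\bX_u$ for all $\tau$ in an open neighborhood of $t$; integrating the indicator over this neighborhood gives $\tau(\bx)>0$, i.e. $\bx\in N$. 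This is exactly the point where continuity of trajectories converts a single instant spent in $\bX_u$ into a time interval of positive length spent in $\bX_u$.

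The step I expect to be the main obstacle is precisely this upgrade from almost-every $t$ to every $t$, because it is false without a topological hypothesis on $\bX_u$: an open subset of $[0,\infty)$ of Lebesgue measure zero must be empty, whereas a closed one need not be. I would therefore make explicit that $\bX_u$ is treated as open (or replace it by its interior, noting that $\partial\bX_u=\{c_k=0\}$ is a measure-zero level set contributing nothing to the time integral for trajectories not tangent to it). An equivalent and arguably cleaner packaging of the same idea is the contrapositive: if $m(B_{t_0})>0$ for some $t_0$, decompose $B_{t_0}=\bigcup_{n\ge1} C_n$ with $C_n:=\{\bx\in B_{t_0} : s_\tau(\bx)\in\bX_u\ \text{for all}\ |\tau-t_0|<1/n\}$, an increasing family whose union covers $B_{t_0}$ by openness, so some $m(C_n)>0$; since $C_n\subseteq B_t$ for $|t-t_0|<1/n$, one gets $\int_0^\infty g(t)\,dt\ge \int_{t_0-1/n}^{t_0+1/n} m(C_n)\,dt = \tfrac{2}{n}\,m(C_n)>0$, contradicting \eqref{eq:lemma_1}. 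Either route rests on the same two ingredients, Tonelli together with continuity of the flow into an open unsafe set, and no measure-preservation property of $s_t$ is needed.
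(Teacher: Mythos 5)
Your proof is correct, and your second (contrapositive) packaging is essentially the paper's own argument: the paper likewise assumes $\int_{\bX_0}\mathds{1}_{\bX_u}(s_{\bar t}(\bx))\,d\bx>0$ for some $\bar t$ and invokes ``continuity of the solution'' to produce a $\Delta>0$ with $\int_{\bar t}^{\bar t+\Delta}\int_{\bX_0}\mathds{1}_{\bX_u}(s_t(\bx))\,d\bx\,dt>0$, contradicting the hypothesis. What you add is the justification the paper omits: the decomposition $B_{t_0}=\bigcup_{n}C_n$ together with continuity of measure from below is exactly what is needed to upgrade ``each trajectory in $B_{t_0}$ stays in $\bX_u$ over some interval'' to ``a positive-measure set of trajectories stays in $\bX_u$ over a common interval,'' and your observation that this hinges on $\bX_u$ being open is a genuine point rather than pedantry --- the paper defines $\bX_{u_k}=\{\bx: c_k(\bx)\le 0\}$ as a closed sublevel set, so as written a positive-measure set of trajectories grazing $\partial\bX_u$ only at the instant $\bar t$ would defeat the paper's continuity step, and one must pass to the interior (or argue that the boundary contributes nothing to the time integral) as you indicate. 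Your first packaging, via Tonelli and the null set $N$ of initial conditions with positive dwell time, is a mildly different and arguably cleaner route to the same conclusion; both variants, like the paper's proof, require no measure-preservation property of the flow.
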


\begin{proof}
The proof is done by the method of contradiction. Let us assume \eqref{eq:lemma_2} is not true. Then there exists a time $\bar{t} \ge 0$ such that 
$$ \int_{\bX_0}\mathds{1}_{\bX_u}(s_{\bar t}(\bx)) d\bx   >0.$$
Now, from the continuity of the solution of the differential equation, there exists a $\Delta$ such that
$$ \int_{\bar t}^{\bar t+\Delta}\int_{\bX_0}\mathds{1}_{\bX_u}(\bs_t(\bx)) d\bx \;dt >0$$
Therefore,
\begin{align*}
    0 <&  \int_{\bar t}^{\bar t+\Delta}\int_{\bX_0}\mathds{1}_{\bX_u}(\bs_t(\bx)) d\bx \;dt \\
    \leq& \int_0^{\infty}\int_{\bX_0}\mathds{1}_{\bX_u}(s_t(\bx)) d\bx \;dt=0.
\end{align*}
Hence, we arrive at a contradiction.
\end{proof}

\begin{proof}[{\bf Proof of Theorem \ref{thm:11}}] We begin with the proof of a.e. convergence and later showcase the proof of obstacle avoidance. The steps for the proof of convergence follow similar lines to the proof of Theorem 1 in \cite{zheng2023safe}.
Let us consider the following:
\begin{align}
    \nabla \cdot ((\bff (\bx) + \bg(\bx)\bk)\rho) = h_0(\bx) \label{eq:proof_1}
\end{align}
where $h_0(\bx) \ge 0$ and $h_0(\bx)\ge \lambda >0$ for $\bx \in \bX_0$. Now, through the method of characteristics, the solution $\rho(\bx)$ for the linear PDE (\ref{eq:proof_1}) can be written as follows \cite{rajaram2010stability}:
\begin{align}
    \rho(\bx) = \int^{\infty}_0 h_0(s_{-t}(\bx))\left | \frac{\partial s_{-t}(\bx)}{\partial \bx}\right|dt,\label{eq:proof_2}
\end{align}
where $|\cdot|$ represents the determinant. This can be easily verified by simple substitution of \eqref{eq:proof_2} in \eqref{eq:proof_1} and using the fact that
\begin{align}
    \lim_{t\rightarrow \infty}h_0(s_{-t}(\bx))\left | \frac{\partial s_{-t}(\bx)}{\partial \bx}\right| = 0 .\label{eq:proof_3}
\end{align}
The limit in \eqref{eq:proof_3} is the consequence of $\rho(\bx)$ being bounded in $\Bar{\bX}$ and using Barbalat's Lemma which states that if $f(t) \in \mathcal{C}^1, \lim_{t\rightarrow \infty}f(t)=c$ for some constant $c>0$ and, if $f'(t)$ is uniformly continuous, then $\lim_{t \rightarrow \infty}f'(t)=0$. The term inside the integral in \eqref{eq:proof_2} can be written using the linear Perron-Frobenius (P-F) operator \cite{lasota1998chaos} defined as follows:
\begin{align}
    [\mathbb{P}_t h_0](\bx):= h_0(s_{-t}(\bx))\left | \frac{\partial s_{-t}(\bx)}{\partial \bx}\right|.\label{eq:proof_4}
\end{align}
Therefore (\ref{eq:proof_2}) can be written compactly using the above definition of the P-F operator as
\begin{align}
    \rho(\bx) = \int^{\infty}_0  [\mathbb{P}_t h_0](\bx)dt .\label{eq:proof_5}
\end{align}
Now, utilizing \eqref{eq:proof_3}, we can write
\begin{align}
    \lim_{t \rightarrow \infty}[\mathbb{P}_t h_0](\bx) = 0 \implies \lim_{t \rightarrow \infty}[\mathbb{P}_t \mathds{1}_{\bX_0}](\bx) = 0 \label{eq:proof_6}
\end{align}
where $\mathds{1}_{\bX_0}$ represents the indicator function for $\bX_0$. Now, (\ref{eq:proof_6}) follows because $h_0(\bx) \ge \lambda > 0\;\;\forall \bx\in\bX_0$ and using dominated convergence theorem. Here, $\mathds{1}_{\bX_0}$ represents the indicator function for $\bX_0$. Now, for any $A \subseteq \Bar{\bX}$, we have
\begin{align}
    \int_A [\mathbb{P}_t \mathds{1}_{\bX_0}](\bx)d\bx =& \int_{\Bar{\bX}}[\mathbb{P}_t \mathds{1}_{\bX_0}](\bx)\mathds{1}_{A}(\bx)d\bx\nonumber\\
    =&\int_{\Bar{\bX}}\mathds{1}_{\bX_0}(\bx)\mathds{1}_{A}(s_{t}(\bx))d\bx\label{eq:proof_7}
\end{align}
This can be observed by using the definition of P-F operator in \eqref{eq:proof_4} and doing the change of variables in integration such as $\by = s_{-t}(\bx)$ and $d\by=\left | \frac{\partial s_{-t}(\bx)}{\partial \bx}\right|d\bx$ and relabelling. The right hand side of \eqref{eq:proof_7} can be seen as follows:
\begin{align*}
    \int_A [\mathbb{P}_t \mathds{1}_{\bX_0}](\bx)d\bx = m\{\bx \in \bX_0 : s_t(\bx)\in A\}.
\end{align*}
Therefore, using \eqref{eq:proof_6}, we observe that
\begin{align*}
    0 =  \int_A [\lim_{t \rightarrow \infty}\mathbb{P}_t \mathds{1}_{\bX_0}](\bx)d\bx = m\{\bx \in \bX_0 : \lim_{t \rightarrow \infty}s_t(\bx)\in A\}.
\end{align*}
The above statement can be generalized for any measurable Lebesgue set $A \subseteq \Bar{\bX}$. Therefore,
\begin{align*}
    m\{\bx \in \bX_0 : \lim_{t \rightarrow \infty}s_t(\bx)\neq 0\} = 0.
\end{align*}
Next, we will show the proof of obstacle avoidance. Now, from the construction of the density function, we know that $\rho(\bx) = 0 \;\forall \bx \in \bX_u$. Therefore, we conclude the following,
\begin{align}
    \int_{\bX_u}\int^{\infty}_0[\mathbb{P}_t \mathds{1}_{\bX_0}](\bx)dt d\bx \le \int_{\bX_u}\rho(\bx)d\bx = 0. \label{eq:proof_8}
\end{align}
Utilizing the Markov property of the P-F operator and the fact that indicator functions are non-negative functions, we can rewrite \eqref{eq:proof_8} as follows:
\begin{align}    \int_{\bX_u}\int^{\infty}_0[\mathbb{P}_t \mathds{1}_{\bX_0}](\bx)dt d\bx  = 0. \label{eq:proof_9}
\end{align}
Now, doing the change of variables in integration such as $\by = s_{-t}(\bx)$ and $d\by=\left | \frac{\partial s_{-t}(\bx)}{\partial \bx}\right|d\bx$ and relabelling, the left-hand side of the \eqref{eq:proof_9} can be written as follows:
\begin{align}
\int_{\bX_u}\int^{\infty}_0[\mathbb{P}_t \mathds{1}_{\bX_0}]&(\bx)dt d\bx = \int_{\Bar{\bX}}\int^{\infty}_0 [\mathbb{P}_t \mathds{1}_{\bX_0}](\bx)\mathds{1}_{\bX_u}(\bx)dt d\bx \nonumber\\
=& \int_0^{\infty}\int_{\Bar{\bX}}\mathds{1}_{\bX_0}(\bx)\mathds{1}_{\bX_u}(s_t(\bx)) d\bx \;dt\nonumber\\
=& \int_0^{\infty}\int_{\bX_0}\mathds{1}_{\bX_u}(s_t(\bx)) d\bx \;dt=0.\label{eq:proof_10}
\end{align}
Now, from \eqref{eq:proof_10} and using Lemma \ref{lemma_1}, we can conclude the following:
\begin{align*}
    \int_{\bX_0}\mathds{1}_{\bX_u}(s_t(\bx)) d\bx = 0 \;\;\;\forall t\ge0   
\end{align*}
\end{proof}

\begin{proof}[{\bf Proof of Theorem \ref{theorem_statefeedback}}]
    The constraint in \eqref{constraint_1} can be rewritten as follows:

\begin{align}
\nabla \cdot (\bff(\bx)\rho) + \sum_{i=1}^m \nabla \cdot (g_i(\bx)\rho\; u_i) \ge 0 \label{eq:main_constraint}
\end{align}

Next, let us consider $\rho u_i := \bar{\rho}_i$. Now, we know that the infinitesimal generator for the P-F operator with respect to $\bff(\bx)$ is given as follows \cite{moyalan2023data}: 
\begin{align}
\lim_{t\to 0}\frac{\mathbb{P}_t\rho-\rho}{t}=-\nabla \cdot (\bff(\bx) \rho(\bx))=: {\cal P}_{\bff}\rho. \label{PF_generator}
\end{align} 
Therefore the definition of P-F generator in \eqref{eq:main_constraint}, we can rewrite the QP-CDF in \eqref{QP-CDF_base} as:
\begin{subequations}\label{QP-CDF_v2}
\begin{align}
    \bu^{\star}(\bx) = \argmin_{\bar{\brho}} \;\;\;\;\bar{\brho}^\top \bar{\bH}(\bx) \bar{\brho}+ \bar{\bJ}(\bx)\bar{\brho}\\
    \text{s.t.}\;\;\;\;\;-\mathcal{P}_{\bff} \rho - \sum_{i=1}^m \mathcal{P}_{g_i} \bar{\rho}_i \ge 0 \label{eq:main_constraint_v1}
\end{align}
\end{subequations}
where $\bar{\brho} = [\bar{\rho}_1,\dots,\bar{\rho}_m]^\top$,$\bar{\bH}(\bx) = \bH(\bx)\bH_1(\bx)$ and $\bar{\bJ}(\bx) = \bJ_1(\bx)\bJ(\bx)$, and
\begin{align*}
    \bH_1(\bx) =\begin{bmatrix}
    \frac{1}{\rho^2}&0&0\\
    0 &\ddots &0 \\
    0&0 &\frac{1}{\rho^2}
\end{bmatrix}, \;\;\;\;\;\;\;
    \bJ_1(\bx) =\begin{bmatrix}
    \frac{1}{\rho}&0&0\\
    0 &\ddots &0 \\
    0&0 &\frac{1}{\rho}
\end{bmatrix}
\end{align*}
Let $\by := [\mathcal{P}_{g_1},\dots,\mathcal{P}_{g_m}]^\top$ and $q(\bx):=-\mathcal{P}_{\bff} \rho(\bx)$. Therefore, \eqref{eq:main_constraint_v1} can be rewritten as follows:
\begin{align}
    \by^\top \bar{\brho} \le q(\bx) \label{transform1}
\end{align}
Let $\bar{\by} = \bar{\bH}^{-1}(\bx)\by$ and $\bar{q}(\bx) = q(\bx) - \by^\top \hat{\bu}$ where $\hat{\bu} = -\bar{\bH}^{-1}(\bx)\bar{\bJ}(\bx)$. Also, let $\bv = \bar{\brho}-\hat{\bu}$ and $<\bv,\bv>:=\bv^\top \bar{\bH}(\bx)\bv$. Therefore, we can rewrite \eqref{QP-CDF_v2} as follows:
\begin{subequations} \label{eq:v_star}
    \begin{align}
    \bv^{\star}(\bx) = \argmin_{ \bv \in \mathds{R}^{m+1}} \;\;\;\;<\bv,\bv>\\
    \text{s.t.}\;\;\;\;\;<\bar{\by}(\bx),\bv> \le \bar{q}(\bx) \label{eq:main_constraint_v2}
\end{align}
\end{subequations}
with $\bar{\brho}^{\star}(\bx) = \bv^{\star}(\bx)+\hat{\bu}(\bx)$.

Therefore, we can rewrite $\bv^{\star}(\bx)$ as follows:
\begin{align}
    \bv^{\star}(\bx) = \lambda(\bx)\bar{\by}(\bx)
\end{align}
where $\lambda(\bx)$ is obtained as the solution to
\begin{align}
    <\bar{\by}(\bx),\bar{\by}(\bx)>\lambda(\bx) \le \bar{q}(\bx), \nonumber \\
    \lambda(\bx) \le 0, \nonumber \\
    <\bar{\by}(\bx),\bar{\by}(\bx)>\lambda(\bx) < \bar{q}(\bx) \implies \lambda(\bx) =0.
\end{align}
Next, we define the Lipschtiz continuous function

\begin{align} \label{omega}
    \omega(r) = \begin{cases}
    0, &\text{if}\; r>0,\\
    r, &\text{if}\; r\le 0, \;\;r \in \mathds{R}
    \end{cases}
\end{align}
Therefore, 
\begin{align}
    \lambda(\bx) = \frac{\omega(\bar{q}(\bx))}{<\bar{\by}(\bx),\bar{\by}(\bx)>} \label{vstar_solution}
\end{align}
Therefore, 
\begin{align}
    \bar{\brho}^{\star}(\bx) = \frac{\omega(\bar{q}(\bx))}{<\bar{\by}(\bx),\bar{\by}(\bx)>} - \bar{\bH}^{-1}(\bx)\bar{\bJ}(\bx) \label{vstar_solution1}
\end{align}
and $\bu^{\star}(\bx) = \frac{\bar{\brho}^{\star}(\bx)}{\rho(\bx)}$.

Based on the properties for the composition and the product of $\mathcal{C}^1$ continuous functions, we observe that the RHS of \eqref{vstar_solution1} is $\mathcal{C}^1$ except when $\bar{q}(\bx) = 0$, which implies that $\bar{\brho}^{\star}(\bx)$ and subsequently $\bu^{\star}(\bx)$ obtained from the QP-CDF are $\mathcal{C}^1$ continuous functions for almost all $\bx \in \bar{\bX}$.
\end{proof}

\begin{proof}[{\bf Proof of Lemma \ref{thm:1}}]
    Let $\bX_0 \subset \bar\bX$ be the set of initial conditions.  Let $Z \subset \bX_0$. Therefore, from the definition of the density function
    \begin{dmath}        \int_{s_t(Z)}\rho(\bx)d\bx - \int_{Z}\rho(\bx)d\bx =\nonumber\\
        \int_0^t \int_{s_{\tau}(Z)}\left[\nabla\cdot(\bF(\bx,\tau)\rho) \right]d\bx d\tau \label{eq:proof_thm1}
    \end{dmath}
First, we will prove that the system trajectories obtained from \eqref{time_varying} avoid $\bX_u$. This proof is done through the method of contradiction. Let there exists an initial condition $\bx_0 \in \bX_0$ such that $s_T(\bx_0) \in \bX_u$ for some $T>0$ and $s_t(\bx_0) \in \bar\bX$ for $t \in [0,T]$. Let $Z \subset \bX_0$ be a small neighbourhood of $\bx_0$ such that $s_T(Z) \subset \bX_u$ for some $T>0$ and $s_t(Z) \subset \bar\bX$ for $t \in [0,T]$. Therefore, from the construction of $\rho$, at $t=T$, the first term on LHS of \eqref{eq:proof_thm1} is zero. Also, $Z \subset \bX_0$. Therefore, the second term on the LHS of \eqref{eq:proof_thm1} is positive. Therefore, overall the LHS of \eqref{eq:proof_thm1} is negative. The RHS of \eqref{eq:proof_thm1} from the constraint of the QP-CDF is positive. This gives a contradiction.

        The part of the proof for the convergence of system trajectories to the target set $\bX_T$ follows using the result of Theorem 2.2 from \cite{masubuchi2021lyapunov}.
\end{proof}

\section*{references}
\bibliographystyle{IEEEtran}
\bibliography{reference,Umesh_ref}

\begin{thebibliography}{10}
\providecommand{\url}[1]{#1}
\csname url@samestyle\endcsname
\providecommand{\newblock}{\relax}
\providecommand{\bibinfo}[2]{#2}
\providecommand{\BIBentrySTDinterwordspacing}{\spaceskip=0pt\relax}
\providecommand{\BIBentryALTinterwordstretchfactor}{4}
\providecommand{\BIBentryALTinterwordspacing}{\spaceskip=\fontdimen2\font plus
\BIBentryALTinterwordstretchfactor\fontdimen3\font minus \fontdimen4\font\relax}
\providecommand{\BIBforeignlanguage}[2]{{%
\expandafter\ifx\csname l@#1\endcsname\relax
\typeout{** WARNING: IEEEtran.bst: No hyphenation pattern has been}%
\typeout{** loaded for the language `#1'. Using the pattern for}%
\typeout{** the default language instead.}%
\else
\language=\csname l@#1\endcsname
\fi
#2}}
\providecommand{\BIBdecl}{\relax}
\BIBdecl

\bibitem{turan2022autonomous}
E.~Turan, S.~Speretta, and E.~Gill, ``Autonomous navigation for deep space small satellites: Scientific and technological advances,'' \emph{Acta Astronautica}, vol. 193, pp. 56--74, 2022.

\bibitem{queralta2020collaborative}
J.~P. Queralta, J.~Taipalmaa, B.~C. Pullinen, V.~K. Sarker, T.~N. Gia, H.~Tenhunen, M.~Gabbouj, J.~Raitoharju, and T.~Westerlund, ``Collaborative multi-robot search and rescue: Planning, coordination, perception, and active vision,'' \emph{IEEE Access}, vol.~8, pp. 191\,617--191\,643, 2020.

\bibitem{sun2021review}
K.~Sun, W.~Cui, and C.~Chen, ``Review of underwater sensing technologies and applications,'' \emph{Sensors}, vol.~21, no.~23, p. 7849, 2021.

\bibitem{binbin2021research}
X.~Binbin, L.~Jizhan, H.~Meng, W.~Jian, and X.~Zhujie, ``Research progress on autonomous navigation technology of agricultural robot,'' in \emph{2021 IEEE 11th Annual International Conference on Cyber Technology in Automation, Control, and Intelligent Systems (CYBER)}.\hskip 1em plus 0.5em minus 0.4em\relax IEEE, 2021, pp. 891--898.

\bibitem{arafat2023vision}
M.~Y. Arafat, M.~M. Alam, and S.~Moh, ``Vision-based navigation techniques for unmanned aerial vehicles: Review and challenges,'' \emph{Drones}, vol.~7, no.~2, p.~89, 2023.

\bibitem{qiao2023survey}
Y.~Qiao, J.~Yin, W.~Wang, F.~Duarte, J.~Yang, and C.~Ratti, ``Survey of deep learning for autonomous surface vehicles in marine environments,'' \emph{IEEE Transactions on Intelligent Transportation Systems}, vol.~24, no.~4, pp. 3678--3701, 2023.

\bibitem{li2023survey}
J.~Li, G.~Zhang, C.~Jiang, and W.~Zhang, ``A survey of maritime unmanned search system: theory, applications and future directions,'' \emph{Ocean Engineering}, vol. 285, p. 115359, 2023.

\bibitem{pfrunder2017real}
A.~Pfrunder, P.~V. Borges, A.~R. Romero, G.~Catt, and A.~Elfes, ``Real-time autonomous ground vehicle navigation in heterogeneous environments using a 3{D} {L}i{DAR},'' in \emph{2017 IEEE/RSJ International Conference on Intelligent Robots and Systems (IROS)}.\hskip 1em plus 0.5em minus 0.4em\relax IEEE, 2017, pp. 2601--2608.

\bibitem{minguez2016motion}
J.~Minguez, F.~Lamiraux, and J.-P. Laumond, ``Motion planning and obstacle avoidance,'' \emph{Springer Handbook of Robotics}, pp. 1177--1202, 2016.

\bibitem{chen2020navigation}
C.~Chen, C.~Li, and H.~G. Tanner, ``Navigation functions with non-point destinations and moving obstacles,'' in \emph{2020 American Control Conference (ACC)}.\hskip 1em plus 0.5em minus 0.4em\relax IEEE, 2020, pp. 2532--2537.

\bibitem{castaneda2021pointwise}
F.~Castaneda, J.~J. Choi, B.~Zhang, C.~J. Tomlin, and K.~Sreenath, ``Pointwise feasibility of gaussian process-based safety-critical control under model uncertainty,'' in \emph{2021 60th IEEE Conference on Decision and Control (CDC)}.\hskip 1em plus 0.5em minus 0.4em\relax IEEE, 2021, pp. 6762--6769.

\bibitem{nguyen2021robust}
Q.~Nguyen and K.~Sreenath, ``Robust safety-critical control for dynamic robotics,'' \emph{IEEE Transactions on Automatic Control}, vol.~67, no.~3, pp. 1073--1088, 2021.

\bibitem{wang2017safe}
L.~Wang, A.~D. Ames, and M.~Egerstedt, ``Safe certificate-based maneuvers for teams of quadrotors using differential flatness,'' in \emph{2017 IEEE International Conference on Robotics and Automation (ICRA)}.\hskip 1em plus 0.5em minus 0.4em\relax IEEE, 2017, pp. 3293--3298.

\bibitem{xu2017realizing}
X.~Xu, T.~Waters, D.~Pickem, P.~Glotfelter, M.~Egerstedt, P.~Tabuada, J.~W. Grizzle, and A.~D. Ames, ``Realizing simultaneous lane keeping and adaptive speed regulation on accessible mobile robot testbeds,'' in \emph{2017 IEEE Conference on Control Technology and Applications (CCTA)}.\hskip 1em plus 0.5em minus 0.4em\relax IEEE, 2017, pp. 1769--1775.

\bibitem{alan2021safe}
A.~Alan, A.~J. Taylor, C.~R. He, G.~Orosz, and A.~D. Ames, ``Safe controller synthesis with tunable input-to-state safe control barrier functions,'' \emph{IEEE Control Systems Letters}, vol.~6, pp. 908--913, 2021.

\bibitem{kolathaya2018input}
S.~Kolathaya and A.~D. Ames, ``Input-to-state safety with control barrier functions,'' \emph{IEEE control systems letters}, vol.~3, no.~1, pp. 108--113, 2018.

\bibitem{khatib1985real}
O.~Khatib, ``Real-time obstacle avoidance for manipulators and mobile robots,'' in \emph{Proceedings. 1985 IEEE International Conference on Robotics and Automation}, vol.~2.\hskip 1em plus 0.5em minus 0.4em\relax IEEE, 1985, pp. 500--505.

\bibitem{koditschek1990robot}
D.~E. Koditschek and E.~Rimon, ``Robot navigation functions on manifolds with boundary,'' \emph{Advances in Applied Mathematics}, vol.~11, no.~4, pp. 412--442, 1990.

\bibitem{rimon1990exact}
E.~Rimon and D.~Koditschek, ``Exact robot navigation using artificial potential functions,'' \emph{IEEE Transactions on Robotics and Automation}, vol.~8, no.~5, pp. 501--518, 1992.

\bibitem{warren1990multiple}
C.~W. Warren, ``Multiple robot path coordination using artificial potential fields,'' in \emph{Proceedings., IEEE International Conference on Robotics and Automation}.\hskip 1em plus 0.5em minus 0.4em\relax IEEE, 1990, pp. 500--505.

\bibitem{park2001obstacle}
M.~G. Park, J.~H. Jeon, and M.~C. Lee, ``Obstacle avoidance for mobile robots using artificial potential field approach with simulated annealing,'' in \emph{ISIE 2001. 2001 IEEE International Symposium on Industrial Electronics Proceedings}, vol.~3.\hskip 1em plus 0.5em minus 0.4em\relax IEEE, 2001, pp. 1530--1535.

\bibitem{nagumo1942lage}
M.~Nagumo, ``{\"U}ber die lage der integralkurven gew{\"o}hnlicher differentialgleichungen,'' \emph{Proceedings of the Physico-Mathematical Society of Japan. 3rd Series}, vol.~24, pp. 551--559, 1942.

\bibitem{blanchini1999set}
F.~Blanchini, ``Set invariance in control,'' \emph{Automatica}, vol.~35, no.~11, pp. 1747--1767, 1999.

\bibitem{ames2016control}
A.~D. Ames, X.~Xu, J.~W. Grizzle, and P.~Tabuada, ``Control barrier function based quadratic programs for safety critical systems,'' \emph{IEEE Transactions on Automatic Control}, vol.~62, no.~8, pp. 3861--3876, 2016.

\bibitem{ames2019control}
A.~D. Ames, S.~Coogan, M.~Egerstedt, G.~Notomista, K.~Sreenath, and P.~Tabuada, ``Control barrier functions: Theory and applications,'' in \emph{2019 18th European Control Conference (ECC)}.\hskip 1em plus 0.5em minus 0.4em\relax IEEE, 2019, pp. 3420--3431.

\bibitem{rantzer2004analysis}
A.~Rantzer and S.~Prajna, ``On analysis and synthesis of safe control laws,'' in \emph{42nd Allerton Conference on Communication, Control, and Computing}.\hskip 1em plus 0.5em minus 0.4em\relax University of Illinois, 2004, pp. 1468--1476.

\bibitem{prajna2007convex}
S.~Prajna and A.~Rantzer, ``Convex programs for temporal verification of nonlinear dynamical systems,'' \emph{SIAM Journal on Control and Optimization}, vol.~46, no.~3, pp. 999--1021, 2007.

\bibitem{prajna2004nonlinear}
S.~Prajna, P.~A. Parrilo, and A.~Rantzer, ``Nonlinear control synthesis by convex optimization,'' \emph{IEEE Transactions on Automatic Control}, vol.~49, no.~2, pp. 310--314, 2004.

\bibitem{moyalan2023data}
J.~Moyalan, H.~Choi, Y.~Chen, and U.~Vaidya, ``Data-driven optimal control via linear transfer operators: A convex approach,'' \emph{Automatica}, vol. 150, p. 110841, 2023.

\bibitem{vaidya2010nonlinear}
U.~Vaidya, P.~G. Mehta, and U.~V. Shanbhag, ``Nonlinear stabilization via control lyapunov measure,'' \emph{IEEE Transactions on Automatic Control}, vol.~55, no.~6, pp. 1314--1328, 2010.

\bibitem{huang2022convex}
B.~Huang and U.~Vaidya, ``A convex approach to data-driven optimal control via {P}erron-{F}robenius and {K}oopman operators,'' \emph{IEEE Transactions on Automatic Control}, 2022.

\bibitem{vaidya2018optimal}
U.~Vaidya, ``Optimal motion planning using navigation measure,'' \emph{International Journal of Control}, vol.~91, no.~5, pp. 989--998, 2018.

\bibitem{moyalan2022navigation}
J.~Moyalan, Y.~Chen, and U.~Vaidya, ``Navigation with probabilistic safety constraints: A convex formulation,'' in \emph{2022 American Control Conference (ACC)}.\hskip 1em plus 0.5em minus 0.4em\relax IEEE, 2022, pp. 826--830.

\bibitem{vaidya2023data}
U.~Vaidya and D.~Tellez-Castro, ``Data-driven stochastic optimal control with safety constraints using linear transfer operators,'' \emph{IEEE Transactions on Automatic Control}, 2023.

\bibitem{moyalan2023convex}
J.~Moyalan, Y.~Chen, and U.~Vaidya, ``Convex approach to data-driven off-road navigation via linear transfer operators,'' \emph{IEEE Robotics and Automation Letters}, vol.~8, no.~6, pp. 3278--3285, 2023.

\bibitem{moyalan2023off}
J.~Moyalan, A.~Zheng, S.~S. Narayanan, and U.~Vaidya, ``Off-road navigation of legged robots using linear transfer operators,'' \emph{IFAC-PapersOnLine}, vol.~56, no.~3, pp. 613--618, 2023.

\bibitem{zheng2023safe}
A.~Zheng, S.~S. Narayanan, and U.~Vaidya, ``Safe navigation using density functions,'' \emph{IEEE Robotics and Automation Letters}, 2023.

\bibitem{moyalan2024synthesizing2}
J.~Moyalan, S.~S. Narayanan, A.~Zheng, and U.~Vaidya, ``Synthesizing controller for safe navigation using control density function,'' in \emph{2024 American Control Conference (ACC)}.\hskip 1em plus 0.5em minus 0.4em\relax IEEE, 2024, (accepted). Available at: \url{https://arxiv.org/abs/2403.14464}.

\bibitem{artstein1983stabilization}
Z.~Artstein, ``Stabilization with relaxed controls,'' \emph{Nonlinear Analysis: Theory, Methods \& Applications}, vol.~7, no.~11, pp. 1163--1173, 1983.

\bibitem{sontag1983lyapunov}
E.~D. Sontag, ``A lyapunov-like characterization of asymptotic controllability,'' \emph{SIAM Journal on Control and Optimization}, vol.~21, no.~3, pp. 462--471, 1983.

\bibitem{wieland2007constructive}
P.~Wieland and F.~Allg{\"o}wer, ``Constructive safety using control barrier functions,'' \emph{IFAC Proceedings Volumes}, vol.~40, no.~12, pp. 462--467, 2007.

\bibitem{yu2022data}
H.~Yu, J.~Moyalan, U.~Vaidya, and Y.~Chen, ``Data-driven optimal control of nonlinear dynamics under safety constraints,'' \emph{IEEE Control Systems Letters}, vol.~6, pp. 2240--2245, 2022.

\bibitem{masubuchi2021lyapunov}
I.~Masubuchi and T.~Kikuchi, ``Lyapunov density criteria for time-varying and periodically time-varying nonlinear systems with converse results,'' \emph{SIAM Journal on Control and Optimization}, vol.~59, no.~1, pp. 223--241, 2021.

\bibitem{calafiore2006scenario}
G.~C. Calafiore and M.~C. Campi, ``The scenario approach to robust control design,'' \emph{IEEE Transactions on Automatic Control}, vol.~51, no.~5, pp. 742--753, 2006.

\bibitem{knizhnik2022flow}
G.~Knizhnik, P.~Li, X.~Yu, and M.~A. Hsieh, ``Flow-based control of marine robots in gyre-like environments,'' in \emph{2022 International Conference on Robotics and Automation (ICRA)}.\hskip 1em plus 0.5em minus 0.4em\relax IEEE, 2022, pp. 3047--3053.

\bibitem{tan2002automatic}
H.-S. Tan, B.~Bougler, and W.-B. Zhang, ``Automatic steering based on roadway markers: From highway driving to precision docking,'' \emph{Vehicle System Dynamics}, vol.~37, no.~5, pp. 315--338, 2002.

\bibitem{rajamani2011vehicle}
R.~Rajamani, \emph{Vehicle dynamics and control}.\hskip 1em plus 0.5em minus 0.4em\relax Springer Science \& Business Media, 2011.

\bibitem{rajaram2010stability}
R.~Rajaram, U.~Vaidya, M.~Fardad, and B.~Ganapathysubramanian, ``Stability in the almost everywhere sense: A linear transfer operator approach,'' \emph{Journal of Mathematical Analysis and Applications}, vol. 368, no.~1, pp. 144--156, 2010.

\bibitem{lasota1998chaos}
A.~Lasota and M.~C. Mackey, \emph{Chaos, fractals, and noise: stochastic aspects of dynamics}.\hskip 1em plus 0.5em minus 0.4em\relax Springer Science \& Business Media, 1998, vol.~97.

\end{thebibliography}
\begin{IEEEbiography}[{\includegraphics[width=1in,height=1.25in,clip,keepaspectratio]{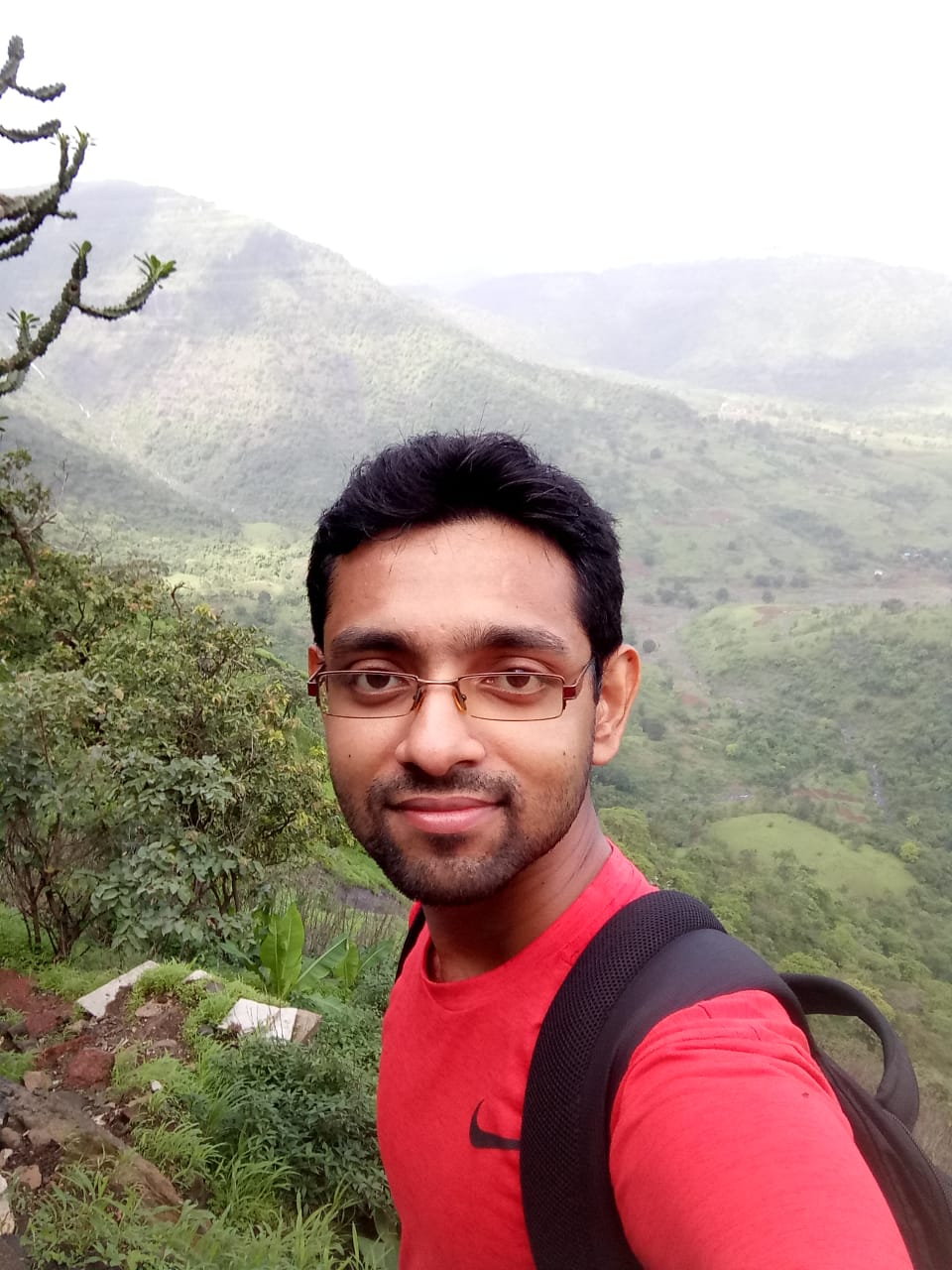}}]{Joseph Moyalan} received his B.E. degree in electronics and telecommunication engineering from Mumbai
University, India, in 2016, and his M. Tech. degree in
electrical engineering (specialization in control systems)
from Veermata Jijabai Technological Institute, Mumbai,
India, in 2019. He is currently pursuing his Ph.D. degree
in the Mechanical Engineering Department at Clemson University, Clemson, SC, USA. His current research interests
include nonlinear optimal control, convex optimization, linear operators, and data-driven control with
application to vehicle autonomy.
\end{IEEEbiography}
\begin{IEEEbiography}
[{\includegraphics[width=1in,height=1.25in,clip,keepaspectratio]{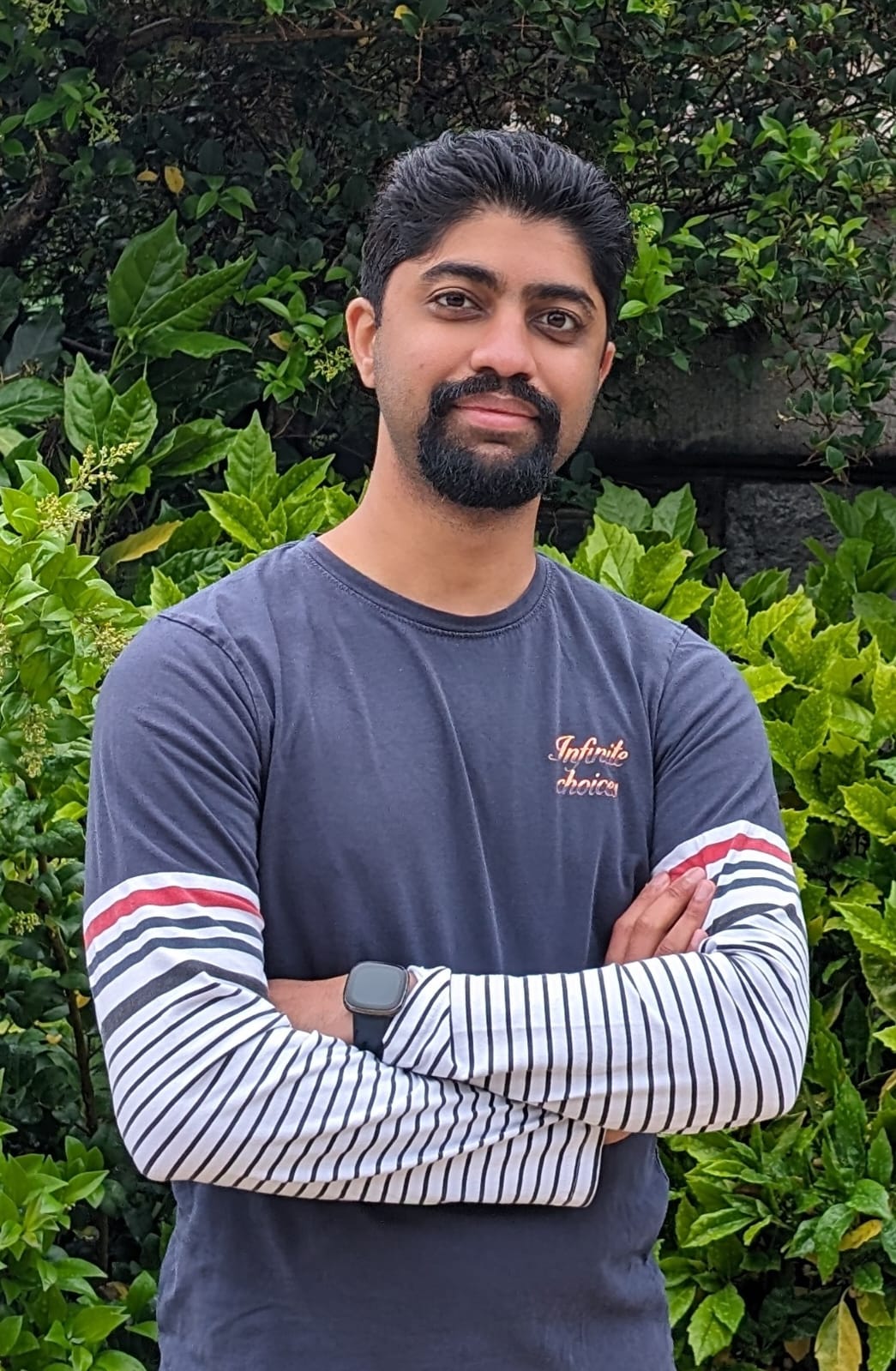}}]{Sriram S. K. S Narayanan} received his B.Tech degree in mechanical engineering from SASTRA University, India, in 2017, and his MS. degree in mechanical engineering from the University of South Florida, Tampa, FL, USA in 2021. He is currently pursuing his Ph.D. degree in the Mechanical Engineering Department at Clemson University, Clemson, SC, USA. His current research interests include motion planning, optimal control, and model predictive control with applications to legged robots and autonomous vehicles.
\end{IEEEbiography}
\begin{IEEEbiography}
[{\includegraphics[width=1in,height=1.25in,clip,keepaspectratio]{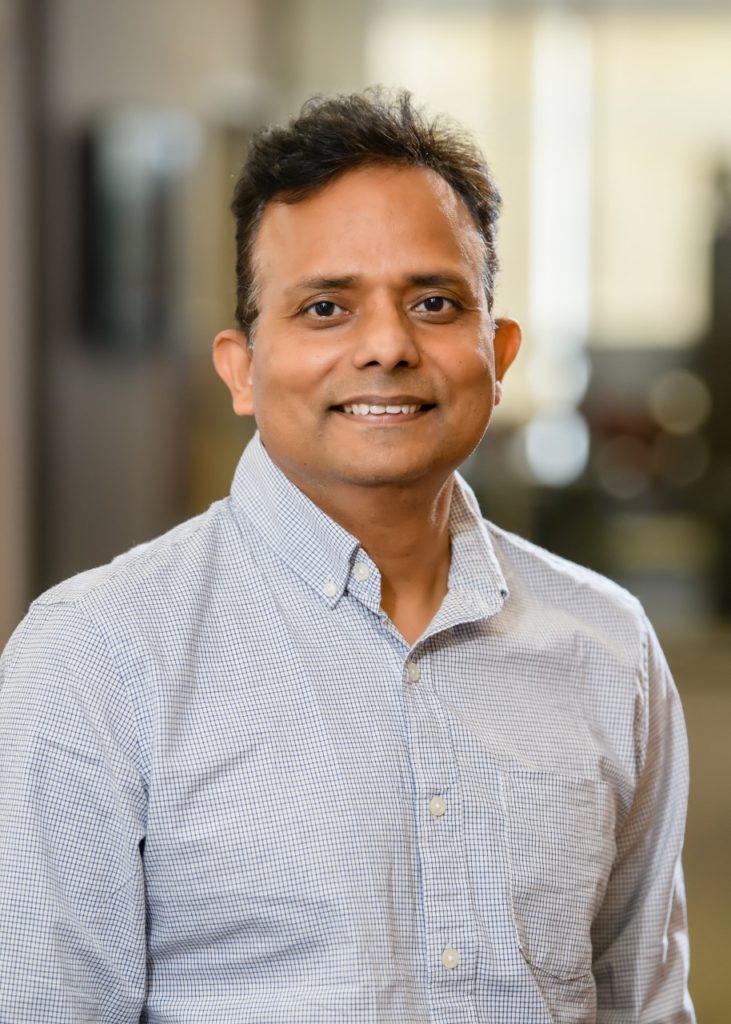}}]{Umesh Vaidya}(M’07, SM'19)  received the Ph.D. degree in mechanical engineering from the University of California at Santa Barbara, Santa Barbara, CA, in
2004. He was a Research Engineer at the United Technologies Research Center (UTRC), East Hartford, CT, USA. He is currently a professor in the Department of Mechanical Engineering, at Clemson University, S.C., USA. Before joining Clemson University in 2019, and since 2006, he was a faculty with the Department of Electrical and Computer Engineering at Iowa State University. He is the recipient of the 2012 National Science Foundation CAREER award. His current research interests include dynamical systems and control theory with applications to power grids and robotics.
\end{IEEEbiography}

\end{document}